\documentclass[lettersize,journal]{IEEEtran}
\def\BibTeX{{\rm B\kern-.05em{\sc i\kern-.025em b}\kern-.08em
    T\kern-.1667em\lower.7ex\hbox{E}\kern-.125emX}}

\usepackage{amsmath,amsfonts, amssymb,bbm,amsthm,mathtools,nccmath,bm}

\DeclareMathOperator*{\argmin}{\arg\!\min}
\usepackage{algorithmic}
\usepackage{algorithm}
\usepackage{array}
\usepackage[caption=false,font=normalsize,labelfont=sf,textfont=sf]{subfig}
\usepackage{textcomp}
\usepackage{stfloats}
\usepackage{url}
\usepackage{verbatim}
\usepackage{graphicx}
\usepackage{enumitem}
\usepackage{cite}
\usepackage[normalem]{ulem}
\usepackage{cancel}
\usepackage{booktabs}
\usepackage[table]{xcolor}
\usepackage{hyperref}
\usepackage{tikz}
\usetikzlibrary{automata, positioning}

\newtheorem{theorem}{Theorem}\newtheorem{corollary}{Corollary}
\newtheorem{proposition}{Proposition}\newtheorem{definition}{Definition}\newtheorem{assumption}{Assumption}

\definecolor{color0}{rgb}{0,0.4470,0.7410}   
\definecolor{color1}{rgb}{0.8500,0.3250,0.0980} 
\DeclareRobustCommand{\ActW}{\tikz[baseline=-0.5ex]\fill[color0] (0,0) circle (2pt);}
\DeclareRobustCommand{\ActT}{\tikz[baseline=-0.6ex]\fill[color1] (-2pt,-2pt) rectangle (2pt,2pt);}

\begin{document}

\title{
Age of Incorrect Information for Generic Discrete-Time Markov Sources
}

\author{Konstantinos Bountrogiannis,~Anthony Ephremides,~Panagiotis Tsakalides,~\IEEEmembership{Member,~IEEE},\\~and~George Tzagkarakis
\IEEEcompsocitemizethanks{\IEEEcompsocthanksitem Konstantinos Bountrogiannis and Panagiotis Tsakalides are with the Department of Computer Science, University of Crete, Heraklion 700~13, Greece, and with the Institute of Computer Science, Foundation for Research and Technology -- Hellas, Heraklion 700~13, Greece\protect\\
E-mail: kbountrogiannis@csd.uoc.gr;  tsakalid@csd.uoc.gr
\IEEEcompsocthanksitem Anthony Ephremides is with the Electrical and Computer Engineering Department, University of Maryland, College Park, MD 20742 USA\protect\\
E-mail: etony@umd.edu
\IEEEcompsocthanksitem George Tzagkarakis is with the Institute of Computer Science, Foundation for Research and Technology -- Hellas, Heraklion 700~13, Greece\protect\\
E-mail: gtzag@ics.forth.gr}
\thanks{This work was supported in part by the European Commission under the framework of the National Recovery and Resilience Plan Greece 2.0 -- NextGenerationEU, Grant TAEDR-0536642 (Smart Cities).}}

\markboth{Age of Incorrect Information for Generic Discrete-Time Markov Sources}{}

\maketitle

\begin{abstract}
This work introduces a framework for analyzing the Age of Incorrect Information (AoII) in a real-time monitoring system with a generic discrete-time Markov source. We study a noisy communication system employing a hybrid automatic repeat request (HARQ) protocol, subject to a transmission rate constraint. The optimization problem is formulated as a constrained Markov decision process (CMDP), and it is shown that there exists an optimal policy that is a randomized mixture of two stationary policies. To overcome the intractability of computing the optimal stationary policies, we develop a multiple-threshold policy class where thresholds depend on the source, the receiver, and the packet count. By establishing a Markov renewal structure induced by threshold policies, we derive closed-form expressions for the long-term average AoII and transmission rate. The proposed policy is constructed via a relative value iteration algorithm that leverages the threshold structure to skip computations, combined with a bisection search to satisfy the rate constraint. To accommodate scenarios requiring lower computational complexity, we adapt the same technique to produce a simpler single-threshold policy that trades optimality for efficiency. Numerical experiments exhibit that both threshold-based policies outperform periodic scheduling, with the multiple-threshold approach matching the performance of the globally optimal policy.
\end{abstract}
\begin{IEEEkeywords}
Age of Incorrect Information, Markov decision process, hybrid ARQ, threshold policies, transmission rate constraint
\end{IEEEkeywords}

\section{Introduction}\label{sec:introduction}

\IEEEPARstart{R}{eal-time} monitoring of remote data sources plays a central role in the development of many emerging applications, driven by recent advancements in communication and sensing technologies. Examples include autonomous driving, real-time video feedback, anomaly detection in critical infrastructures, remote surgery, augmented reality networks, and haptic communications. In such use cases, the timely delivery of information is essential, and the Age of Information (AoI)~\cite{bib:aoi} metric has been introduced and evaluated to assess system performance. Compared to the traditional delay metrics, AoI measures not only the communication time in the network, but also the time a message spends before being scheduled for submission to the network. As such, AoI measures the freshness of the communicated information and highlights the need to update the monitoring system with more current data to avoid excessive aging. Formally, the AoI process $\delta^{\text{AoI}}_t$ is defined as the difference 

\begin{equation}
    \delta^{\text{AoI}}_t \triangleq t-u_t\,,
\end{equation}
where $u_t$ is the generation time of the most recently received message.

One limitation of the traditional AoI metric is that it focuses solely on the freshness of information, without accounting for the dynamics of the data source itself. For example, if one source changes rapidly and another changes slowly, the same communication system may produce samples with the same AoI for both sources. However, the sample from the fast-changing source may be less relevant or useful for decision-making, because it is more likely to deviate from the contemporary source state at the time of arrival.

This observation prompted the introduction of the Age of Incorrect Information (AoII) metric~\cite{bib:aoii}, which is the focus of this work. Unlike traditional AoI, AoII is a content-aware metric that penalizes the duration during which the information at the monitor differs from the source state, i.e., is incorrect, and is weighted by a distortion function. Specifically, let $g(S_t,\hat{S_t})$ denote a distortion function between the source state $S_t$ and the estimate at the monitor $\hat{S_t}$ at time $t$. Moreover, define the age function
\begin{equation}
    a_t \triangleq t-h_t\,,
\end{equation}
where $h_t$ is the last time instant when $g(X_t,\hat{X_t})$ was zero. The AoII process $\Delta_t$ is the product
\begin{equation}\label{eq:aoii_def}
    \delta_t \triangleq a_t \cdot g(X_t,\hat{X_t}).
\end{equation}
In our analysis, we employ an indicator distortion function,

\begin{equation}
    g(X_t,\hat{X_t})=\mathbbm{1}_{\{X_t\neq \hat{X_t}\}}\,.
\end{equation}
Figure~\ref{fig:aoi_vs_aoii} illustrates concurrent sample paths of both the AoI and the AoII.

\begin{figure}
    \centering
    \includegraphics[width=.9\linewidth]{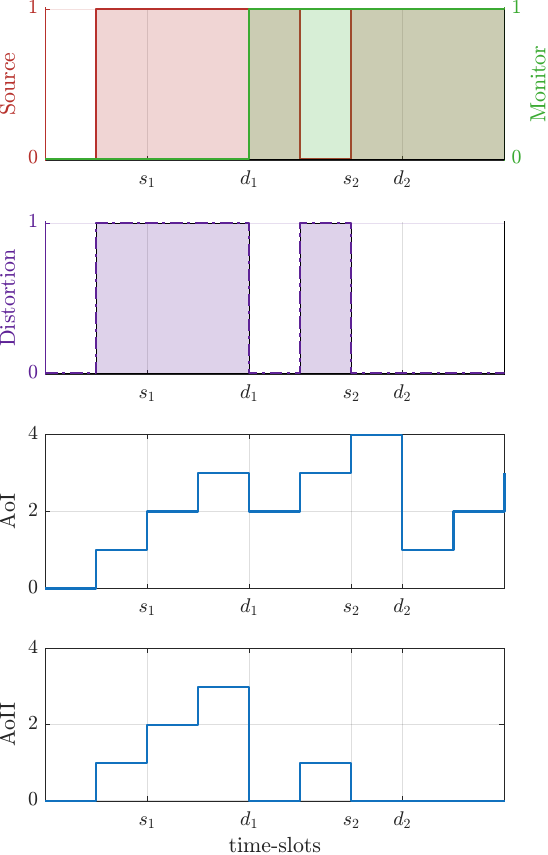}
    \caption{Concurrent discrete-time sample paths of the AoI and AoII. A binary source is sampled at slots $s_i$, $i=1,2$, and decoded by the receiver at slots $d_i$, $i=1,2$, respectively. The distortion is the absolute difference between the source and monitor values.}
    \label{fig:aoi_vs_aoii}
\end{figure}

The AoII metric has been mostly studied with discrete-time Markov sources, since they offer an analytical advantage. The source models in~\cite{9162726,11165199,10225853} are binary symmetric Markov chains. The works~\cite{bib:kbountro_harq,bib:aoii,bib:aoii_multiple_sources2,bib:kbountro_VLSF,kriouile2022minimizingageincorrectinformation} focus on arbitrarily large symmetric Markov models. In~\cite{9686027}, the source model is a bird-death Markov chain. An alternative formulation is employed in~\cite{bib:aoii_semantics2}, where the source is not modeled directly, but a binary Markov chain is used to model the distortion function. 
Those works exhibit that AoII-optimal transmission policies in resource-constrained environments commonly admit a threshold structure, a result also typical for the conventional AoI~\cite{bib:aoi_harq,bib:threshold_policy_3,bib:threshold_policy_5,bib:threshold_and_uniform_policy,bib:nonlinear_Sun}.

Generic Markov sources, i.e., without any pre-specified structure, have limited appearance in the literature. 
The seminal work~\cite{11071330} analyzes the AoII for a generic continuous-time Markov source, proving that the optimal policy in the continuous setting is a multiple-threshold policy where thresholds depend on both the source and the estimate at the receiver. Acknowledging the intractability of the problem, a suboptimal solution is proposed in which the thresholds depend only on the estimate. Recently, a work available in~\cite{cosandal2025minimizingfunctionsageincorrect} develops a similar policy for a discrete-time Markov source. Notably, multiple-threshold policies have been previously analyzed in the setting of bird-death Markov chains in~\cite{9686027}.

Our work investigates a noisy communication system that employs a generic discrete-time Markov source. The transmitter and receiver use a hybrid automatic repeat request (HARQ) protocol to correct transmission errors, with the decoder requesting additional transmissions to improve the decoding probability. Additionally, the transmission policy is subject to a resource constraint on the long-term average transmission rate.

The main contributions of our work are summarized as follows: 
\begin{itemize}
    \item We study AoII minimization for generic discrete-time Markov sources over a noisy HARQ channel under a transmission rate constraint, extending prior works that focus on structured or symmetric models.
    \item The optimization problem is formulated as a constrained Markov decision process (CMDP), and it is shown that there exists an optimal policy that is a randomized mixture of two stationary policies.
    \item To address the intractability of computing the optimal stationary policies, we develop a class of multiple-threshold transmission policies, denoted as $\mathcal{F}(S,W,r)$, where transmissions are triggered when the AoII exceeds state-dependent thresholds based on the source state, the receiver estimate, and the packet count.
    \item We establish a Markov renewal structure induced by threshold policies and derive closed-form expressions for the long-term average AoII and transmission rate, enabling efficient evaluation of candidate policies. Leveraging the threshold structure, we design an efficient relative value iteration (RVI) algorithm combined with a bisection search to approximate the optimal threshold policy.
    \item To accommodate scenarios with large state spaces or frequent policy refinements, we develop single-threshold policies, $\mathcal{F}(1)$, that trade optimality for efficiency by lifting the dependency on the real-time system parameters except the AoII.
\end{itemize}

\section{Problem Definition}\label{sec:problem}

\subsection{Communication Model}\label{sec:communication}

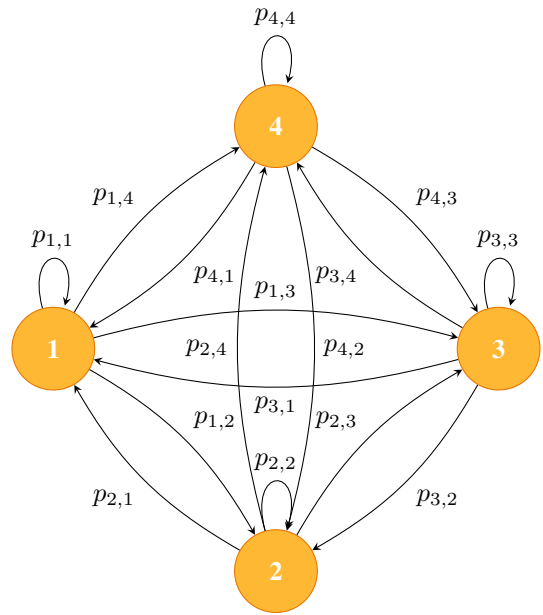
\begin{figure}
\centering
\begin{tikzpicture}[scale=0.74, transform shape=false,->, >=stealth, auto, node distance=3cm,
    every state/.style={circle, draw=orange!90!black, fill=orange!70!yellow!80, text=white, minimum size=1.1cm, font=\bfseries}]
    
    \foreach \i in {1,...,4} {
        \node[state] (q\i) at (90+360/4*\i:4cm) {\i};
    }

    \foreach \i in {1,...,4} {
        \foreach \j in {1,...,4} {
            \ifnum\i=\j
                \path (q\i) edge[loop above] node {$p_{\i,\j}$} (q\i);
            \else
                \path (q\i) edge[bend left=15] node {$p_{\i,\j}$} (q\j);
            \fi
        }
    }

\end{tikzpicture} 
\caption{A generic Markov chain of $N=4$ states.}
  \label{fig:source}
\end{figure}

We consider a discrete-time (slotted) communication model over a noisy channel. We focus on $N$-ary generic Markov sources. In this context, the (single-step) transition probability from state $i$ to state $j$ is denoted by $p_{i,j}$, $\{i,j\}\in\{1,\dots,N\}^2$. An example is depicted in Fig.~\ref{fig:source} for $N=4$. 

At each time slot, the transmitter acquires a sample from the source and decides whether to transmit or discard it. The sample to be transmitted is encoded into a packet with a channel coding scheme and communicated through a noisy channel to the receiver. Upon receiving the packet, the receiver attempts to decode it. If the decoding is successful, the receiver notifies the transmitter with an ACK feedback message. Otherwise, it sends a NACK message to request an additional transmission, and the transmitter decides whether to retransmit or reject the request.

The receiver holds the last successfully decoded value until the following successful decoding. The probability of successful decoding is specified by a non-decreasing function $d(r)\in (0,1)$, where $r$ is the number of packets at the receiver for the specific sample.

The packet transmission duration is constant and equal to one time slot, whereas the ACK/NACK packets are instantaneous. The instantaneous feedback message is a typical assumption in the literature, justified by its limited information content. 

Lastly, motivated by the need to conserve or allocate power and network resources, we impose a constraint on the transmission rate. In particular, it is required that the long-term average transmission rate, measured in packets per slot, be less than or equal to $R\in [0,1]$.

\subsection{Optimization Problem}

The general problem is formulated as an infinite-horizon average-cost constrained Markov decision problem (CMDP) denoted by $\mathcal{K}=(\mathcal{X},\mathcal{Y},P,C,R)$, where
\begin{itemize}[leftmargin=*]
    \item the state of the system at time $t$ is $K_t=(S_t,W_t,\delta_t,r_t)\in\mathcal{X}$, where $S_t\in\{1,\dots,N\}$ denotes the source state, $W_t\in\{1,\dots,N\}$ denotes the state known by the receiver, $\delta_t$ is the AoII, and $r_t\in\{0,\dots,r_{max}\}$ is the number of packets received for the current sample transmission.
    \item The actions $y_t\in \mathcal{Y}=\{0,1\}$ denote the action at time $t$, where the action space consists of the \textit{wait} ($y_t=0$) and the \textit{transmit} ($y_t=1$) actions. 
    \item The action-dependent transition probabilities $P(K_{t+1}\mid K_t, y_t)$ are summarized in Table~\ref{tab:transitions_original}, reflecting the model as described in Sec.~\ref{sec:communication}.
    \item The cost at state $K_t=(S_t,W_t,\delta_t,r_t)$ is equal to the instantaneous AoII $C(K_t)\triangleq\delta_t$.
    \item The long-term average number of ``transmit'' actions is constrained to not exceed $R\in(0,1]$.

\end{itemize}

\begin{table*}
\normalsize
\hrule
~
    \begin{equation}\label{eq:transitions}
        \begin{aligned}
        %
        %
        &P(S_{t+1},W_{t+1}=W_t,\delta_{t+1}=\delta_t+1,r_{t+1}=0\mid M_t, y_t=0) = p_{S_t,S_{t+1}} &&\forall~S_{t+1}\!\neq\! W_t\\
        &P(S_{t+1}=W_t,W_{t+1}=W_t,\delta_{t+1}=0,r_{t+1}=0\mid M_t, y_t=0) = p_{S_t,W_t}\\~\\
        %
        %
        &P(S_{t+1},W_{t+1}=W_t,\delta_{t+1}=\delta_t+1,r_{t+1}=0\mid M_t, y_t=1) = p_{S_t,S_{t+1}}\cdot (1-d(r_t)) &&\forall\ S_{t+1}\!\notin\! \{S_t,W_t\}\\
        &P(S_{t+1}=W_t,W_{t+1}=W_t,\delta_{t+1}=0,r_{t+1}=0\mid M_t, y_t=1) = p_{S_t,W_t}\cdot (1-d(r_t))\\
        &P(S_{t+1}=S_t,W_{t+1}=W_t,\delta_{t+1}=\delta_t+1,r_{t+1}=r_t+1\mid M_t, y_t=1) = p_{S_t,S_t}\cdot (1-d(r_t)) && \text{if}\ S_t\neq W_t\\
        &P(S_{t+1},W_{t+1}=S_t,\delta_{t+1}=\delta_t+1,r_{t+1}=0\mid M_t, y_t=1) = p_{S_t,S_{t+1}}\cdot d(r_t) &&\forall\ S_{t+1}\!\neq\! S_t\\
        &P(S_{t+1}=S_t,W_{t+1}=S_t,\delta_{t+1}=0,r_{t+1}=0\mid M_t, y_t=1) = p_{S_t,S_t}\cdot d(r_t)
        \end{aligned}
    \end{equation}

\hrule
\caption{The action-dependent conditional transition probabilities $P(M_{t+1}\mid M_{t},y_t)$.}
\label{tab:transitions_original}
\end{table*}

Our goal is to find the transmission policy $\psi=\{y_t\}_{t\geq 0}$ that minimizes the long-term average AoII while satisfying the transmission rate constraint. The problem can be expressed as a linear programming problem, as follows,

\begin{definition}[Main Optimization Problem]
\begin{equation}\label{eq:main_opt_problem}
\begin{aligned}
    \text{Minimize}\quad &\bar{J}_\psi(L_0) \!\triangleq\! \limsup\limits_{T\rightarrow \infty}\frac{1}{T}E_{\psi}\!\left[\sum_{t=0}^{T-1}\delta_t \mid L_0\right]\\
    \text{subject to}\quad &\bar{R}_\psi(L_0) \!\triangleq\! \limsup\limits_{T\rightarrow \infty}\frac{1}{T}E_{\psi}\!\left[\sum_{t=0}^{T-1}y_t\mid L_0\right]\!\leq\! R\
\end{aligned}
\end{equation}
\end{definition}

The constrained problem can be relaxed to its unconstrained Lagrangian form. Particularly, define the MDP $\mathcal{M}=(\mathcal{X},\mathcal{Y},P,C_\lambda)$, which is identical to $\mathcal{K}$, with the exception that the transmission rate constraint is replaced by a transmission penalty $\lambda$, integrated into the instantaneous cost $C_\lambda(M_t,y_t) \triangleq \delta_t+\lambda y_t$. Let $\psi_\lambda$ denote the related transmission policy.

\begin{definition}[Lagrangian Problem]
\begin{equation}\label{eq:lagrange_problem}
    \text{Minimize}\; \bar{J}_{\psi_\lambda}(M_0) {\triangleq} {\lim\limits_{T\rightarrow \infty}}\!\sup\limits_{\lambda\geq 0}\!\frac{1}{T}E_{\psi_\lambda}{\left[\sum_{t=0}^{T-1}\delta_t{+}\lambda y_t\!\mid\! M_0\right]} {-} \lambda R
\end{equation}
\end{definition}

For any fixed value of $\lambda$, let
\begin{equation}
    \psi_\lambda^* \triangleq \argmin_\psi \bar{J}_{\psi_\lambda}\,,
\end{equation}
\begin{equation}
    g_\lambda \triangleq \min_\psi \bar{J}_{\psi_\lambda}\,,
\end{equation}
denote the $\lambda$-optimal Lagrangian policy and the average cost achieved thereof, respectively.

It is easy to see that $\mathcal{M}$ is unichain, i.e., there exists a single recurrent class. Thus, the optimal policy $\psi_\lambda$ can be found by solving the Bellman equations~\cite[Thm. 6.5.2]{bib:krishnamurthy},

\begin{equation}\label{eq:bellman_def}
    g_\lambda + V(M_t) = \min_{y_t}\{\delta_t+\lambda y_t+\!\!\sum_{M_{t+1}}\!\!P(M_{t+1}\!\mid\! M_t, y_t)V(M_{t+1})\},
\end{equation}
where $V(M_t)$ is the value function of the state $M_t$.

\section{Optimal Policy Structure}\label{sec:optimal}
This section discusses an optimal solution of the CMDP $\mathcal{K}$~\eqref{eq:main_opt_problem} and the difficulties of its computation. The main result is the following theorem.

\begin{theorem}\label{thm:cmdp_policy}
There exists an optimal policy $\psi^*$ of~\eqref{eq:main_opt_problem}, which is a randomized mixture of two stationary policies $\psi_{\lambda^+}$ and $\psi_{\lambda^-}$ that are both optimal solutions of the Lagrangian MDP~\eqref{eq:lagrange_problem} with parameters $\lambda^+$ and $\lambda^-$, respectively. In particular,
\begin{align}\label{eq:lambda_star}
        \lambda^+ &\triangleq \inf\{\lambda\in\mathbbm{R^+}:\bar{R}_{\psi_{\lambda}}\leq R\},\\
        \lambda^- &\triangleq \sup\{\lambda\in\mathbbm{R^+}:\bar{R}_{\psi_{\lambda}}\geq R\},
\end{align}
where $\bar{R}_{\psi_{\lambda}}$ is the transmission rate achieved by $\psi$.

The optimal policy $\psi^*$ randomizes between $\psi_{\lambda^+}$ and $\psi_{\lambda^-}$ whenever the induced Markov chain $\{M_t\}_{t\geq 0}$ reaches their common regeneration set

\begin{equation}\label{eq:regeneration}
    \mathcal{C}\triangleq\{K=(s,w,\delta,r):s=w,\delta=0,r=0\}.
\end{equation}

Upon reaching $\mathcal{C}$, the policy $\psi^*$ selects $\psi_{\lambda^-}$ with probability $\rho$ and $\psi_{\lambda^+}$ with probability $1-\rho$. The mixing probability is chosen such that the randomized policy has an average transmission rate equal to $R$, and is explicitly defined as
\begin{equation}
    \rho \triangleq \frac{R - \bar{R}_{\psi_{\lambda^+}}}{\bar{R}_{\psi_{\lambda^-}} - \bar{R}_{\psi_{\lambda^+}}}\,.
\end{equation}
\end{theorem}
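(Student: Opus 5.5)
The argument follows the classical Lagrangian route for average-cost CMDPs with a countable state space (Sennott; Beutler--Ross), adapted to the regeneration structure of $\mathcal{M}$. There are four steps, carried out in order.

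\textbf{Step 1: existence of Lagrangian solutions.} For every $\lambda\geq 0$ the Lagrangian MDP $\mathcal{M}$ must admit a stationary deterministic optimal policy $\psi_\lambda$ satisfying the Bellman equation~\eqref{eq:bellman_def}. Since $\delta_t$ is unbounded, unichain-ness alone is not enough. I would verify Sennott's conditions for countable-state average-cost MDPs. Fix the reference state $(s,s,0,0)\in\mathcal{C}$. The policy ``always transmit'' reaches $\mathcal{C}$ with geometric tails: at each slot the probability of a successful decoding is at least $d(0)>0$. Because of these geometric tails, the expected cost of a cycle, which involves $E[\tau^2]$ for the return time $\tau$, is finite. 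This gives bounded relative discounted value functions $h_\alpha(K)-h_\alpha(K_0)$, and the vanishing-discount argument then yields a solution of the average-cost optimality equation with $g_\lambda$ finite.

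\textbf{Step 2: monotonicity and the two multipliers.} I would show that $\lambda\mapsto\bar{R}_{\psi_\lambda}$ is nonincreasing and $\lambda\mapsto\bar J_{\psi_\lambda}$ is nondecreasing. This follows from the standard exchange argument: write optimality of $\psi_{\lambda_1}$ at $\lambda_1$ and of $\psi_{\lambda_2}$ at $\lambda_2$, add the two inequalities, and read off $(\lambda_2-\lambda_1)(\bar R_{\psi_{\lambda_1}}-\bar R_{\psi_{\lambda_2}})\geq 0$. With this, $\lambda^+$ and $\lambda^-$ in~\eqref{eq:lambda_star} are well defined and satisfy $\lambda^-\leq\lambda^+$. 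Where the endpoints are involved, I would take limits of optimal policies along sequences $\lambda_n\downarrow\lambda^+$ and $\lambda_n\uparrow\lambda^-$. Limits exist pointwise on the countable state space by a diagonal argument, and by continuity of $g_\lambda$ and closedness of the set of Bellman-optimal actions the limits remain $\lambda^\pm$-optimal. They satisfy $\bar R_{\psi_{\lambda^+}}\leq R\leq \bar R_{\psi_{\lambda^-}}$. The main technical issue I expect is that $\bar R$ is not obviously continuous under these limits, since costs are unbounded. I would handle this with uniform integrability of the cycle lengths, which follows from the geometric bound in Step~1. In the typical case $\lambda^-=\lambda^+$, the two policies are both optimal for the same Lagrangian and differ only at tie states.

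\textbf{Step 3: mixing at regenerations.} The set $\mathcal{C}$ is a common regeneration set: every stationary policy returns to $\mathcal{C}$ with finite mean time, and the state $(s,s,0,0)$ carries no memory of the past except $s$. The mixed policy $\psi^*$ picks $\psi_{\lambda^-}$ with probability $\rho$ at each visit to $\mathcal{C}$, so its cycles are i.i.d. (or Markov-modulated through $s$). By the renewal-reward theorem, $\bar R_{\psi^*}$ and $\bar J_{\psi^*}$ are ratios of mixed cycle rewards to mixed cycle lengths. With the given $\rho$ one obtains $\bar R_{\psi^*}=R$. This is exact if $\rho$ is interpreted as the time-share weight, that is, the per-cycle probability is adjusted by the mean cycle lengths; otherwise the equation defining $\rho$ is solved for the per-cycle probability. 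Continuity and monotonicity in the mixing probability guarantee a solution.

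\textbf{Step 4: optimality.} Both policies attain $g_{\lambda}$ in the Lagrangian with common multiplier $\lambda^*$; in the degenerate case one has $\lambda^*=\lambda^+=\lambda^-$. Since both solve the same Bellman equation, any regeneration-mixture of them also attains $\bar J+\lambda^*\bar R=g_{\lambda^*}$. For any feasible $\psi$,
\[
\bar J_\psi\geq \bar J_\psi+\lambda^*(\bar R_\psi-R)\geq g_{\lambda^*}-\lambda^* R=\bar J_{\psi^*},
\]
using $\bar R_{\psi^*}=R$. Hence $\psi^*$ is optimal for~\eqref{eq:main_opt_problem}.
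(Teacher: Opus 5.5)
Your skeleton is the paper's. The paper verifies Sennott's Assumptions 1--5 and then takes your Steps 2--4 directly from Sennott's Lemmas 3.4--3.12: monotone rate in $\lambda$, limit policies at $\lambda^*$ from both sides, randomization on $\mathcal{C}$, and weak duality. So the comparison reduces to two places where your version needs repair.

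\textbf{The mixing probability.} Your caveat on $\rho$ is justified, and the paper's proof never checks the displayed formula. With the policy re-drawn at each visit to $\mathcal{C}$ (as in the statement and in~\eqref{eq:Jpsi}), the rate is a ratio of mixed cycle means. Even in the idealized case of i.i.d.\ cycles, the displayed $\rho$ yields $R$ only if $\mathbbm{E}[L(\psi_{\lambda^-})]=\mathbbm{E}[L(\psi_{\lambda^+})]$. Otherwise the resulting policy is infeasible or, for $\lambda^*>0$, strictly suboptimal.

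Your ``time-share'' reinterpretation does not rescue this. $\mathcal{C}$ has $N$ states, and the stationary law $\pi$ of $Z_i$ under the mixture moves with the mixing probability. So even with the time fraction under $\psi_{\lambda^-}$ set to $\rho$, the rate is a $\rho$-combination of the ratios $\sum_z\pi_z\mathbbm{E}_z[C(\psi_{\lambda^\pm})]/\sum_z\pi_z\mathbbm{E}_z[L(\psi_{\lambda^\pm})]$ taken under the mixture's $\pi$. These are not $\bar R_{\psi_{\lambda^\pm}}$.

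Only your fallback is sound: choose the per-regeneration probability from the rate equation itself, by continuity and the intermediate value theorem (monotonicity is not evident here). What you prove is therefore existence of a suitable mixing probability, not the displayed $\rho$. That existence claim is the correct form of the statement.

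\textbf{Uniform tails.} The uniform integrability you invoke in Step 2 must hold uniformly over the sequence $\psi_{\lambda_n}$. A tail bound for the single always-transmit policy does not give this. Your Step 1 mechanism is also off: a successful decoding lands in $\mathcal{C}$ only if the source then stays put, and $p_{s,s}$ may vanish for a generic source.

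Use instead three facts:
\begin{itemize}
\item a wait or a failed decoding leaves $W$ unchanged;
\item failure has probability at least $1-d(r_{\max})>0$, independently of the source move;
\item the irreducible source reaches $W$ along a shortest path.
\end{itemize}
Hence under every policy, from every state with $s\neq w$, $\mathcal{C}$ is entered within $N-1$ slots with probability at least $\bigl((1-d(r_{\max}))\,p_{\min}\bigr)^{N-1}$, where $p_{\min}$ is the smallest positive $p_{i,j}$. This policy-uniform geometric tail is what actually supports Step 1, the continuity of $\bar J$ and $\bar R$ under limits in Step 2, and the telescoping with unbounded relative values behind the mixture's Lagrangian optimality in Step 4.

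\textbf{A minor slip.} With the definitions in~\eqref{eq:lambda_star} and a nonincreasing rate, one gets $\lambda^+\le\lambda^-$, not the reverse. Strict inequality means $\bar R_{\psi_\lambda}=R$ on an interval, where a single Lagrangian policy is already optimal and no mixing is needed.
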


\begin{proof}
    The proof is based on~\cite{bib:sennott}. The details can be found in Appendix~\ref{apdx:proof1}.
\end{proof}

The optimal policy defined in Theorem~\ref{thm:cmdp_policy} randomizes between two policies $\psi_{\lambda^+}$ and $\psi_{\lambda^-}$. The policy-induced Markov chains of those policies form distinct renewal processes sharing a common regeneration set $\mathcal{C}$, the analysis of which yields the long-term average AoII of the optimal policy $\psi^*$, as follows.

For all $i\in\mathbbm{N}$, define the regeneration (stopping) times under the $\psi$-induced Markov chain,

\begin{equation}\label{eq:T_i}
    T_{i+1}(\psi)\triangleq \min\{t>T_i(\psi):S_t=W_t\mid \psi\},
\end{equation}
where $T_0(\psi)$ is assumed to be the very first visit to $\mathcal C$. Each interval $[T_i(\psi),T_{i+1}(\psi)]$ defines a \textit{regeneration cycle}.
Let
\begin{align}
    L_i(\psi)&\triangleq T_{i+1}(\psi)-T_i(\psi)\,,\label{eq:L_i}\\
    J_{i}(\psi)&\triangleq \sum_{t=T_i(\psi)}^{T_{i+1}(\psi)-1}\delta_t\,,\label{eq:J_i}
\end{align}
define the cycle length and the cumulative AoII in the $i$-th cycle, respectively, under $\psi$.

\begin{proposition}
    The long-term average AoII of $\psi^*$ equals
    \begin{equation}\label{eq:Jpsi}
        \bar{J}_{\psi^*} = \frac{\rho \mathbbm{E}[J({\psi_{\lambda^-}})]+ (1-\rho) \mathbbm{E}[J{(\psi_{\lambda^+})}]}{\rho \mathbbm{E}[L{(\psi_{\lambda^-})]}+ (1-\rho) \mathbbm{E}[L{(\psi_{\lambda^+})}]}.
    \end{equation}
\end{proposition}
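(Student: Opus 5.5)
We plan to prove this with the renewal-reward theorem, applied to the regenerative process generated by $\psi^*$ at its visits to the common regeneration set $\mathcal{C}$ of Theorem~\ref{thm:cmdp_policy}. Under $\psi^*$, the chain $\{M_t\}$ returns to $\mathcal{C}$ at the stopping times $T_i(\psi^*)$ of~\eqref{eq:T_i}. At each such time a fresh coin is flipped, independently of the past, and it selects $\psi_{\lambda^-}$ with probability $\rho$ or $\psi_{\lambda^+}$ with probability $1-\rho$. The selected stationary policy is then followed until the next return to $\mathcal{C}$.

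The first step is to check that the cycles are i.i.d. Every state in $\mathcal{C}$ has $\delta=0$ and $r=0$. We will argue that the law of the cycle does not depend on which $s=w$ the chain sits in; if it does, we condition on the entry state and use the stationary distribution of the embedded chain on $\mathcal{C}$. Given this, together with the strong Markov property at $T_i$ and the independence of the randomization, the pairs $(L_i(\psi^*),J_i(\psi^*))$ are i.i.d. Each pair has the mixture law
\begin{equation*}
(L,J)\sim \rho\,\mathcal{L}\bigl(L(\psi_{\lambda^-}),J(\psi_{\lambda^-})\bigr)+(1-\rho)\,\mathcal{L}\bigl(L(\psi_{\lambda^+}),J(\psi_{\lambda^+})\bigr).
\end{equation*}
Conditioning on the coin then gives $\mathbbm{E}[L(\psi^*)]=\rho\mathbbm{E}[L(\psi_{\lambda^-})]+(1-\rho)\mathbbm{E}[L(\psi_{\lambda^+})]$, and the same identity holds for $J$.

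The second step is to verify the integrability conditions of the renewal-reward theorem. We need $\mathbbm{E}[L]<\infty$ and $\mathbbm{E}[J]<\infty$ under each of the two policies. The unichain property of $\mathcal{M}$ gives positive recurrence of $\mathcal{C}$. Since $\delta_t\le L$ within a cycle, we have $J\le L^2$, so it suffices to show $\mathbbm{E}[L^2]<\infty$. We expect this from a geometric tail bound. From any state, the source reaches the receiver's estimate, or a transmission succeeds (recall $d(r)>0$), with probability bounded below within a bounded number of slots. This gives $P(L>n)\le c\gamma^n$ for some $\gamma<1$, which holds provided the source chain is irreducible; we take that as a standing assumption.

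The third step applies the renewal-reward theorem to the cycle rewards $J_i$. It gives almost-sure convergence of $\frac1T\sum_{t<T}\delta_t$ to $\mathbbm{E}[J]/\mathbbm{E}[L]$. To pass to the $\limsup$ of expectations in~\eqref{eq:main_opt_problem}, we use the renewal-reward theorem in expectation form (via Wald's identity). The initial segment before $T_0$ has finite expected cost, so it contributes nothing in the limit. Substituting the mixture expectations from the first step yields~\eqref{eq:Jpsi}.

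We expect the main obstacle to be the first step. We must confirm that the cycle distribution does not depend on the particular state in $\mathcal{C}$ where the cycle starts. If it does depend on it, we replace the scalar expectations by expectations under the stationary distribution of the entry state. That distribution in turn depends on the mixture, so the clean formula would need the convention that $\mathbbm{E}[\cdot]$ is taken with respect to a suitable entry law.
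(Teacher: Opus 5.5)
Your argument is the paper's argument made careful. The paper's proof only records the coin $\Psi^*_i$ drawn at each $T_i$, asserts that $\{(L_i,J_i)\}_{i\ge 0}$ is a renewal-reward process, and invokes the renewal-reward theorem. It checks no integrability and never asks from which state of $\mathcal{C}$ a cycle starts.

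The obstacle you flag is real, and the hope in your first step is false. The action taken at $(s,s,0,0)$ cannot move $W$: a failure keeps $W=s$, and a success sets $W=S_t=s$. So a cycle started there has $L=1$ with probability exactly $p_{s,s}$, which varies with $s$ for a generic source (e.g.\ $0.52$ versus $0.72$ for the source of Fig.~\ref{fig:lambdaRVI}). The cycles are therefore not i.i.d., and you should commit to your fallback. That is where your route departs from the paper's.

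In the fallback, $\{(Z_i,L_i)\}$ is a Markov renewal process, as the paper itself uses later for Corollary~\ref{corol:AoII}. Because the coin is independent of $Z_i$, the entry chain under $\psi^*$ has kernel $\rho\bm{B^{(\psi_{\lambda^-})}}+(1-\rho)\bm{B^{(\psi_{\lambda^+})}}$ (notation of Theorem~\ref{thm:param}); call its stationary law $\pi^*$. Moreover, $\mathbbm{E}_z[J(\psi^*)]=\rho\,\mathbbm{E}_z[J(\psi_{\lambda^-})]+(1-\rho)\,\mathbbm{E}_z[J(\psi_{\lambda^+})]$, and likewise for $L$. The Markov renewal-reward theorem then gives
\begin{equation*}
\bar{J}_{\psi^*}=\frac{\sum_{z}\pi^*_z\bigl(\rho\,\mathbbm{E}_z[J(\psi_{\lambda^-})]+(1-\rho)\,\mathbbm{E}_z[J(\psi_{\lambda^+})]\bigr)}{\sum_{z}\pi^*_z\bigl(\rho\,\mathbbm{E}_z[L(\psi_{\lambda^-})]+(1-\rho)\,\mathbbm{E}_z[L(\psi_{\lambda^+})]\bigr)}.
\end{equation*}
This is the stated formula exactly when each unconditional $\mathbbm{E}[\cdot]$ is read as $\sum_z\pi^*_z\mathbbm{E}_z[\cdot]$, with this single, mixture-dependent $\pi^*$. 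If instead $\mathbbm{E}[J(\psi_{\lambda^\pm})]$ is averaged against each policy's own law $\pi_{z\mid\psi_{\lambda^\pm}}$, as the notation of Corollary~\ref{corol:AoII} suggests, the identity fails in general.

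If you prefer to stay with the ordinary renewal-reward theorem, regenerate at returns to one fixed recurrent state $(z_0,z_0,0,0)$. Those blocks are i.i.d., and splitting each block over its visits to $\mathcal{C}$ recovers the display. The paper's version buys brevity; yours buys a formula whose expectations are actually well defined.

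One repair is needed in your integrability step. A successful decoding does not end a cycle: it resets $W$ to $S_t$, and the cycle ends only if the source then stays put, which has probability $p_{S_t,S_t}$ and may be zero. So $d(r)>0$ is not the relevant fact; what gives the geometric tail is $d(r)<1$.

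Under either action, $W$ is left unchanged with probability at least $\min_r(1-d(r))>0$, independently of the source move. By irreducibility, the source then hits the frozen $W$ within $N$ slots with probability bounded below, uniformly over states and policies. This yields $P(L>n)\le c\gamma^n$, hence $\mathbbm{E}[L^2]<\infty$ from every entry state. It also gives finite expected cost before $T_0$.
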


\begin{proof}
Define the embedded process $\{\Psi^*_i\}_{i\geq 0}$,
\begin{equation}
    \Psi^*_i=\begin{cases}
        \psi_{\lambda^-}    &\text{w.p.}\ \rho\,,\\
        \psi_{\lambda^+}    &\text{w.p.}\ 1-\rho\,,
    \end{cases}
\end{equation}
indicating which policy is selected at the regeneration times $T_i$. The pair sequence $\left\{\left(L_i(\psi),J_i(\psi)\right)\right\}_{i\geq 0}$ constitutes a renewal-reward process. The result follows from the renewal-reward theorem.
\end{proof}

To make use of Theorem~\ref{thm:cmdp_policy}, it is necessary to compute $\psi_\lambda$ and $\bar{R}_{\psi_\lambda}$. However, their computation is not straightforward. Particularly,
\begin{enumerate}[label=(\alph*)]
    \item The traditional path of analyzing the Bellman equations~\eqref{eq:bellman_def} -- in the hope of deriving a threshold structure of $\psi_\lambda$ -- is very hard due to the joint transitions of the source-receiver pair. 
    \item To compute $\bar{R}_{\psi_{\lambda}}$ in a scalable manner, the underlying MDP needs to possess a tractable structure.
\end{enumerate}

The same difficulties are inherited in the computation of $\mathbbm{E}[J_{\psi_{\lambda}}],~\mathbbm{E}[L_{\psi_{\lambda}}]$ in~\eqref{eq:Jpsi}.

Dynamic programming algorithms, such as the Relative Value Iteration (RVI) and Policy Iteration (PI)~\cite[Chap. 6]{bib:krishnamurthy}, although slow, are valuable for approximating $\lambda$-optimal policies. We present such results in Fig.~\ref{fig:lambdaRVI} that experimentally illustrate a threshold structure. Hence, although not provably optimal, the threshold structure is a sensible assumption for transmission policies.

\begin{figure}
    \centering
    \includegraphics[width=0.9\linewidth]{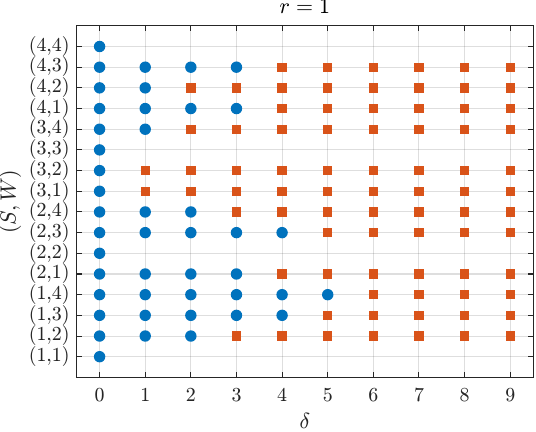}\\~\\
    \includegraphics[width=0.9\linewidth]{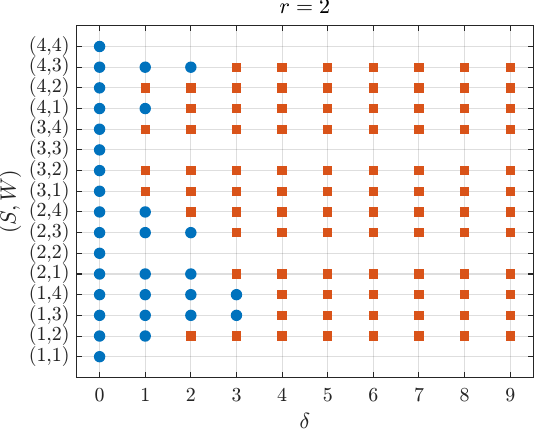}
    \caption{$\psi^*_\lambda$ policy (\ActW{} wait, \ActT{} transmit) for every source-receiver $(S,W)$ pair, and packet count $r=1$ (top) and $r=2$ (bottom). Note: When $S=W$, then $\delta=0$ always. The maximum transmission packet count is $r_\text{max}=2$, and the probability of successful decoding $d(r)$ is $0.5$ and $0.75$ for $r=1$ and $r=2$, respectively. The transmission penalty $\lambda=8$, while the source transition probabilities $p_{i,j}$ are given by the transition matrix $\left[ \begin{smallmatrix} \mathstrut
        0.52 & 0.12 & 0.18 & 0.18 \mathstrut\\ 
        0.17 & 0.57 & 0.17 & 0.09 \mathstrut\\ 
        0.03 & 0.06 & 0.72 & 0.19 \mathstrut\\ 
        0.16 & 0.10 & 0.18 & 0.56 \mathstrut
    \end{smallmatrix} \right]$.}
    \label{fig:lambdaRVI}
\end{figure}

\section{\texorpdfstring{$\mathcal{F}(S,W,r)$}{F(S,W,r)}: The Class of Full Space Multiple-Threshold Policies}

Following the analysis and our observations in Sec.~\ref{sec:optimal}, we introduce a construction method that is optimal under the assumption that $\psi_{\lambda}$ are threshold-based on $\delta$ for all $\lambda$. In this section, we study the class of stationary multiple-threshold policies $\mathcal{F}(S,W,r)$ in which the thresholds depend on the source $S$, the receiver $W$, and the packet count $r$. Formally,

\begin{definition}~\\
    $\mathcal{F}(S,W,r)$ is the class of policies, where each $\psi\in\mathcal{F}(S,W,r)$ is a function of the system state $M_t=(s,w,\delta,r)$ such that
    \begin{equation}\label{eq:y}
        \psi(s,w,\delta,r) = \begin{cases}
            0 \quad &\text{if }\delta<n(s,w,r),\\
            1 \quad &\text{if }\delta\geq n(s,w,r).
        \end{cases}
    \end{equation}
\end{definition}

Exploiting the structure of the policies in $\mathcal{F}(S,W,r)$, we can analyze the policy-induced Markov chain to compute $\bar{R}_{\psi_{\lambda}}$ and construct a fast RVI algorithm that approximates $\psi_\lambda$. Moreover, the same analysis can be used to compute $\bar{J}_{\psi^*}$~\eqref{eq:Jpsi}.

\subsection{Analysis of \texorpdfstring{$\mathcal{F}(S,W,r)$}{F(S,W,r)}}\label{sec:F(s,w,r)analysis}

Consider the Markov chain induced by $\psi\in\mathcal{F}(S,W,r)$, with the state at time $t$ denoted by $H^{(\psi)}_t=(S_t,W_t,\delta_t,r_t)\in\mathcal{X}$.

        %
        %


We will analyze $\{H^{(\psi)}_t\}_{t\ge 0}$ into \textit{cycles}. Each cycle starts and ends in a {regeneration} state $(S,W,\delta,r)\in\mathcal{C}$~\eqref{eq:regeneration}, i.e., every state with zero AoII. To characterize each cycle, we will adopt the definitions in~\eqref{eq:T_i},\eqref{eq:L_i},\eqref{eq:J_i} to denote the regeneration times $T_i(\psi)$, the cycle lengths $L_i(\psi)$, and the cumulative AoII $J_i(\psi)$ in the $i$-th cycle, respectively, under $\psi$. 

Notice that every state $(S,W,\delta,r)\in\mathcal{C}$ can be written as $(S,S,0,0)$, and is thereby distinguishable by the source value $S$. For all $i\in\mathbbm{N}$, define the embedded Markov chain $\{Z_i(\psi)\}_{i\ge 0}$ that tracks the regeneration states of $\{H^{(\psi)}_t\}_{t\ge 0}$,
\begin{equation}\label{eq:Z_i}
Z_i(\psi) = S_{T_i(\psi)}\,.
\end{equation}

Within each cycle, the AoII starts from zero and grows linearly until the end of the cycle. This allows writing $J_i(\psi)$ with the following alternative expression that is free of the $\delta_t$ parameter,
\begin{equation}\label{eq:J_i_alt}
    J_i(\psi) = \sum_{n=0}^{L_i(\psi)-1}n\,.
\end{equation}

The sequence $\{(Z_i(\psi),L_i(\psi))\}_{i\geq 0}$ forms a Markov renewal process~\cite{cinlar2013introduction}. Let $\mathcal{Z}$ denote the state space of $\{Z_i(\psi)\}_{i\geq 0}$. For all $z\in\mathcal{Z}$, we employ the following conditional expectation notations, $\mathbbm{E}_z[J(\psi)]\triangleq \mathbbm{E}[J_i(\psi)\mid Z_i(\psi)=z]$ and $\mathbbm{E}_z[L(\psi)]\triangleq \mathbbm{E}[L_i(\psi)\mid Z_i(\psi)=z]$. In addition, let $\pi_{z\mid \psi}$ denote the steady-state probability mass function of $Z_i(\psi)$. The following corollary is an application of the renewal-reward theorem.

\begin{corollary}\label{corol:AoII}
    The long-term average AoII equals     
    \begin{equation}
        \bar{J}_\psi =\frac{\sum _{z=1}^N \pi_{z\mid \psi}\mathbbm{E}_z[J(\psi)]}{\sum_{z=1}^N \pi_{z\mid \psi}\mathbbm{E}_z[L(\psi)]}.\label{eq:J}
    \end{equation}
\end{corollary}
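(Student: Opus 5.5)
The plan is to prove Corollary~\ref{corol:AoII} with the renewal-reward theorem for Markov renewal processes: sum the AoII cycle by cycle and apply the strong law of large numbers to the embedded chain $\{Z_i(\psi)\}_{i\ge0}$. The argument has four steps.

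\emph{Step 1: the embedded chain.} I would first check that $\{Z_i(\psi)\}$ is a Markov chain on the finite set $\mathcal{Z}\subseteq\{1,\dots,N\}$. Every state of $\mathcal{C}$ has the form $(z,z,0,0)$. By the strong Markov property of $\{H^{(\psi)}_t\}$ at the stopping time $T_i(\psi)$, the future evolution depends only on $z$. The same property shows that, conditional on $Z_i=z$, the pair $(L_i(\psi),J_i(\psi))$ together with $Z_{i+1}$ is independent of the past. By \eqref{eq:J_i_alt}, $J_i$ is a deterministic function of $L_i$, so it inherits this.

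Since $\mathcal{M}$ is unichain, the regeneration states reached under $\psi$ form a single recurrent class of $\{Z_i\}$. This chain therefore has a unique stationary pmf $\pi_{z\mid\psi}$, and the ergodic theorem gives $\tfrac1n\sum_{i<n}f(Z_i)\to\sum_z\pi_{z\mid\psi}f(z)$ almost surely for every $f$.

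\emph{Step 2: finite cycle moments.} I would show $\mathbbm{E}_z[L]<\infty$ and $\mathbbm{E}_z[J]<\infty$ by a geometric tail bound on $L_i$. The bound uses that $\delta$ grows each slot, so after at most $\max n(s,w,r)$ slots the policy transmits. Each slot then ends the cycle with probability bounded below, via $d(r)>0$ together with a source move to $W$ or a successful decode. This is the main obstacle: if some $p_{s,w}$ vanish or a threshold is infinite, I would fall back on finiteness and irreducibility to get a uniform positive probability of hitting $\mathcal{C}$ within a fixed number of slots. Since $J_i\le L_i^2/2$, the geometric tail gives finite second moments of $L$, which covers $J$.

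\emph{Step 3: the ratio.} Write the time-$T_n$ average as
\[
\frac{\sum_{t=T_0}^{T_n-1}\delta_t}{T_n-T_0}=\frac{\tfrac1n\sum_{i<n}J_i}{\tfrac1n\sum_{i<n}L_i}.
\]
Setting $f(z)=\mathbbm{E}_z[J]$, I would apply the Markov-renewal strong law, i.e.\ the martingale difference $J_i-\mathbbm{E}_{Z_i}[J]$ plus the ergodic theorem on $Z_i$. This gives numerator $\to\sum_z\pi_{z\mid\psi}\mathbbm{E}_z[J]$, and likewise the denominator converges.

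\emph{Step 4: from $T_n$ to general $T$.} To pass to arbitrary $T$, bracket $T$ between consecutive regeneration times; the partial-cycle term is $O(L_{n}^2/n)\to0$ by Step 2. The prefix before $T_0$ is a.s.\ finite and so contributes nothing. Finally, convergence of expectations in \eqref{eq:main_opt_problem} follows from the renewal-reward theorem in expectation form, via uniform integrability from the geometric tails, yielding \eqref{eq:J}.
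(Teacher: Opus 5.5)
Your proposal is correct and follows the paper's route: the paper only observes that $\{(Z_i(\psi),L_i(\psi))\}_{i\ge 0}$ is a Markov renewal process and invokes the renewal-reward theorem. Your four steps (the embedded chain, finite cycle moments, the martingale-plus-ergodic strong law, and the passage from $T_n$ to general $T$ with uniform integrability) supply the details the paper leaves implicit. One small caution on Step 2: a successful decode ends a cycle only if the source also stays put, so your irreducibility fallback is the argument that works in general. It works because $W$ stays fixed with probability at least $1-\max_r d(r)>0$ in every slot, whatever action is taken.
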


Following the same procedure, we can characterize the long-term average transmission rate. Define the number of transmissions in the $i$-th cycle,
\begin{equation}\label{eq:barR_i}
C_i(\psi) \triangleq \sum_{t=T_i(\psi)}^{T_{i+1}(\psi)-1}y_t,\quad\forall~i~\in\mathbbm{N},
\end{equation}
and $\mathbbm{E}_z[C(\psi)]\triangleq \mathbbm{E}[C_i(\psi)\mid Z_i(\psi)=z]$ as previously. 

\begin{corollary}\label{corol:Rate}
    The long-term average transmission rate equals     
    \begin{equation}
        \Bar{R}_\psi =\frac{\sum_{z=1}^N \pi_{z\mid \psi}\mathbbm{E}_z[C(\psi)]}{\sum_{z=1}^N \pi_{z\mid \psi}\mathbbm{E}_z[L(\psi)]},  \label{eq:barR}  
    \end{equation}
\end{corollary}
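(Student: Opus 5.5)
The plan is to apply the renewal-reward theorem for Markov renewal processes to the reward sequence $\{C_i(\psi)\}_{i\ge 0}$, exactly as Corollary~\ref{corol:AoII} is obtained from $\{J_i(\psi)\}_{i\ge 0}$. The argument has three steps.

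\textbf{Step 1 (regenerative structure).} I will confirm that the triple $\{(Z_i(\psi),L_i(\psi),C_i(\psi))\}_{i\ge0}$ has the Markov renewal property. Fix $\psi\in\mathcal{F}(S,W,r)$. At each regeneration time $T_i(\psi)$ the chain $H^{(\psi)}_t$ is in state $(Z_i,Z_i,0,0)$. The policy is stationary and deterministic, and the transitions in Table~\ref{tab:transitions_original} depend only on the current state. Hence, given $Z_i(\psi)=z$, the trajectory during the $i$-th cycle is conditionally independent of the past and has a law depending only on $z$. That trajectory determines $L_i(\psi)$, $C_i(\psi)$ (through the actions $y_t=\psi(H^{(\psi)}_t)$) and $Z_{i+1}(\psi)$. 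So $\{Z_i(\psi)\}$ is a Markov chain on $\mathcal{Z}\subseteq\{1,\dots,N\}$, and $C_i(\psi)$ is a cycle reward with conditional mean $\mathbbm{E}_z[C(\psi)]$.

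\textbf{Step 2 (integrability and ergodicity).} This is the main obstacle. I need $\mathbbm{E}_z[L(\psi)]<\infty$ for every $z$, and $\{Z_i(\psi)\}$ should have a stationary pmf $\pi_{z\mid\psi}$.

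For finiteness, I will use a geometric bound on the cycle length. From any state with $S_t\neq W_t$, the cycle ends within one slot with probability at least $p_{S_t,W_t}$, or through a successful decoding, of probability at least $d(\cdot)>0$, once transmission is triggered. If some $p_{s,w}$ vanish, I will instead use a bounded number of steps $m$ and a uniform probability $\epsilon>0$ of ending the cycle within $m$ slots. Here $m$ can be taken as $\max n(s,w,r)+O(1)$ when thresholds are finite, or as the source hitting time otherwise, relying on the unichain assumption of $\mathcal{M}$ stated earlier. This gives $P(L_i>km)\le(1-\epsilon)^k$.

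Since at most one transmission occurs per slot, $C_i(\psi)\le L_i(\psi)$, so $\mathbbm{E}_z[C(\psi)]\le\mathbbm{E}_z[L(\psi)]<\infty$.

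Because $\mathcal{Z}$ is finite and the chain is unichain, $\{Z_i\}$ has a unique stationary distribution $\pi_{z\mid\psi}$ on its recurrent class, and every initial condition reaches $\mathcal{C}$ almost surely.

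\textbf{Step 3 (limit).} I will invoke the renewal-reward theorem for Markov renewal processes~\cite{cinlar2013introduction}. By the ergodic theorem for $\{Z_i\}$, almost surely
\[
\frac{1}{n}\sum_{i<n}C_i\to\sum_z\pi_{z\mid\psi}\mathbbm{E}_z[C(\psi)],\qquad \frac{1}{n}\sum_{i<n}L_i\to\sum_z\pi_{z\mid\psi}\mathbbm{E}_z[L(\psi)].
\]
Taking the ratio and interpolating for $T_n\le T<T_{n+1}$ gives the pathwise limit $\frac{1}{T}\sum_{t<T}y_t\to\bar{R}_\psi$ in the form~\eqref{eq:barR}. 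The initial segment before $T_0$ is a.s. finite and contributes nothing in the limit.

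To pass from almost-sure convergence to the expectation-based $\limsup$ in~\eqref{eq:main_opt_problem}, I will use that $0\le\frac{1}{T}\sum_{t<T}y_t\le1$ and apply bounded convergence. This makes the $\limsup$ an actual limit equal to~\eqref{eq:barR}.
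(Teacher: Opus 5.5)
Your proposal is correct and follows the paper's route: the paper's entire argument is to apply the renewal-reward theorem to the Markov renewal process $\{(Z_i(\psi),L_i(\psi))\}$ with per-cycle reward $C_i(\psi)$, relying, as you do, on the asserted unichain property for uniqueness of $\pi_{z\mid\psi}$, and your Steps 2--3 supply the integrability and almost-sure-to-expectation details it leaves implicit. One fix in Step 2: a successful decoding ends the cycle only if the source also stays put, which has probability $d(r)\,p_{S_t,S_t}$ and may be zero, so the uniform bound should instead come from the source reaching the current $W_t$ within $N$ slots while $W$ stays unchanged, via waits or failed decodings each of probability at least $\min_r(1-d(r))>0$, which gives $P(L_i>kN)\le(1-\epsilon)^k$ for every policy.
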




To compute $\pi_{z\mid \psi},~\mathbbm{E}_z[J(\psi)],~\mathbbm{E}_z[C(\psi)],$ and $\mathbbm{E}_z[L(\psi)]$, we rely on the theory of absorbing Markov chains~\cite[Chap. 3]{bib:markov_chains}. Let $\{H^{(\psi)*}_t\}$ be an augmented Markov chain that mimics the transitions of $\{H^{(\psi)}_t\}$, but is also equipped with absorbing states that correspond to the regeneration states of $\{H^{(\psi)}_t\}$. The transition matrix of $\{H^{(\psi)*}_t\}$ is constructed as follows.

Denote with $\bm{Q^{(\psi)}}$ the transient-to-transient probability matrix, and with $\bm{U^{(\psi)}}$ the transient-to-absorbing probability matrix. As noted above, the ending state of the $i$-th cycle can be represented solely by the source state $S_{T_{i+1}(\psi)}\!=\!Z_{i+1}(\psi)$, simplifying the dimensionality of $\bm{U^{(\psi)}}$. Thus, $\bm{Q^{(\psi)}}_{(S,W,\delta,r),(S',W',\delta',r')}$ denotes the probability that the augmented chain transits from the transient state $(S,W,\delta,r)$ to another transient state $(S',W',\delta',r')$, while $\bm{U^{(\psi)}}_{(S,W,\delta,r),(Z)}$ denotes the transition probability to the regeneration (absorbing) state $Z\in\mathcal{Z}=\{1,\dots,N\}$. The state space of $\{H^{(\psi)*}_t\}$ is $\mathcal{H^*}=\mathcal{X}\cup\mathcal{Z}$.

Clearly, a cycle in $\{H^{(\psi)}_t\}$ corresponds to a path in $\{H^{(\psi)*}_t\}$ starting from the same state and ending in some absorbing state. It is important to note that all states in the state space of $\{H^{(\psi)}_t\}$ are considered as \emph{starting} states in the construction of $\bm{Q^{(\psi)}}$; even the regeneration states, since they are the starting points of every cycle. Concurrently, $\bm{U^{(\psi)}}$ encodes the probability to \emph{arrive} at some regeneration state. 

Notice that the AoII is unbounded, which leads to infinite-sized matrices $\bm{Q^{(\psi)}}$ and $\bm{U^{(\psi)}}$. Fortunately, the AoII dimension can be truncated without loss of information. To see this, observe that the transition probabilities do not depend explicitly on the precise value of $\delta_t$, but only whether $\delta_t$ lies below or above the corresponding threshold. The same holds for the computation of $\mathbbm{E}_z[L(\psi)],~\mathbbm{E}_z[J(\psi)],~\mathbbm{E}_z[C(\psi)]$ (cf.~\eqref{eq:L_i},~\eqref{eq:J_i_alt},~\eqref{eq:barR_i}). Therefore, the AoII dimension can be truncated at the maximum threshold value $n_{\text{max}}\triangleq \max_{\{s,w,r\}}n(s,w,r)$. More precisely, let $\tilde\delta_t$ denote the AoII index in the truncated space. Then,
\begin{equation}
    \tilde\delta_t\triangleq\min(\delta_t,n_{\text{max}}),
\end{equation}
which preserves the transition structure and all performance metrics of interest while yielding finite-dimensional matrices $\bm{Q^{(\psi)}}$ and $\bm{U^{(\psi)}}$.

With the above notes at hand, $\bm{Q^{(\psi)}}$ and $\bm{U^{(\psi)}}$ satisfy

\begin{align}
    &\bm{Q^{(\psi)}}_{(S,W,\tilde\delta,r),(S',W',\tilde\delta',r')} {=}\begin{cases}
        P(S'\!,\!W'\!,\!\tilde\delta'\!,\!r'\!\mid\! S,\!W\!,\!\tilde\delta,\!r) &\hspace{-2pt}\text{if }\tilde\delta'>0,\\
        0 &\hspace{-2pt}\text{if }\tilde\delta'=0.
    \end{cases} \\
    &\bm{U^{(\psi)}}_{(S,W,\tilde\delta,r),(Z)} =\begin{cases}
        P(Z,Z,\tilde\delta',r'\mid S,W,\tilde\delta,r) &\text{if }\tilde\delta'=0,\\
        0 &\text{if }\tilde\delta'>0 .
    \end{cases}
\end{align}
The complete transition matrix of $\{H^{(\psi)*}_t\}$ is
\begin{equation}
    \bm{P_{H^{(\psi)*}}}\triangleq\left[
\begin{array}{cc}
\bm{Q^{(\psi)}} & \bm{U^{(\psi)}} \\
\bm{0} & \bm{I}
\end{array}
\right].
\end{equation}

In the context of absorbing Markov chains, the fundamental matrix $\bm{N^{(\psi)}}\triangleq\bm{(I-Q^{(\psi)})}^{-1}$ quantifies the expected number of visits in every transient state until absorption. More precisely, $\bm{N^{(\psi)}}_{(S,W,\tilde\delta,r), (S',W',\tilde\delta',r')}$ is the expected number of visits in the state $(S',W',\tilde\delta',r')$ before absorption, when starting from the state $(S,W,\delta,r)$. In addition, the absorption probability matrix $\bm{B^{(\psi)}}\triangleq \bm{N^{(\psi)}}\cdot\bm{U^{(\psi)}}$ represents the probabilities of being absorbed into each state.

\begin{theorem}\label{thm:param}
    The statistical parameters of the $\psi$-induced system, $\psi\in\mathcal{F}(S,W,r)$, are computed as follows. 
    \begin{align}
        \mathbbm{E}_z[L(\psi)] &= \bm{m^{(\psi)}}_{(z,z,0,0)}\,,\\
        \mathbbm{E}_z[J(\psi)] &= \frac{ \bm{u^{(\psi)}}_{(z,z,0,0)}- \bm{m^{(\psi)}}_{(z,z,0,0)}}{2},\\
        \mathbbm{E}_z[C(\psi)] &= \sum_{\substack{{(S',W',\tilde\delta',r')}\,:\,\\\tilde\delta'=n(S')-1}}\hspace{-2pt}\rho(S')\bm{N^{(\psi)}}_{(z,z,0,0),(S',W',\tilde\delta',r')} \\
        &\phantom{=}+ \sum_{\substack{{(S',W',\tilde\delta',r')}\,:\,\\\tilde\delta'\geq n(S')}}\bm{N^{(\psi)}}_{(z,z,0,0),(S',W',\tilde\delta',r')}\,,
    \end{align}
    where the vectors 
    \begin{align*}
        &\bm{m^{(\psi)}} = \bm{N^{(\psi)}}\bm{1},\\
        &\bm{u^{(\psi)}} = \bm{N^{(\psi)}}(\bm{1} + 2\bm{Q^{(\psi)}}\bm{N^{(\psi)}}\bm{1}),\\
        &\bm{1}\triangleq[1~1~\cdots~1]^T.
    \end{align*}
     Moreover, the steady-state mass function $\pi_{z\mid\psi}$ can be derived according to the transition probability matrix of $\{Z_i\}$, which equals
    \begin{equation*}
        P_{z,z'\mid \psi} = \bm{B^{(\psi)}}_{(z,z,0,0),(z')}\,.
    \end{equation*}
\end{theorem}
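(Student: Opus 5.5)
The plan is to read each quantity as a functional of the absorption time of the augmented chain $\{H^{(\psi)*}_t\}$ started at the transient state $(z,z,0,0)$, and then apply the standard absorbing-chain identities of~\cite[Chap. 3]{bib:markov_chains}. Because the threshold policy depends on $\delta$ only through $\tilde\delta=\min(\delta,n_{\max})$, the truncation preserves the law of the cycle. Hence it suffices to work with the finite matrices $\bm{Q^{(\psi)}}$ and $\bm{U^{(\psi)}}$.

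First, the cycle length $L_i(\psi)$ given $Z_i(\psi)=z$ equals the number of transient states visited before absorption, counting the start state $(z,z,0,0)$. This state is transient by construction, since it appears as a row of $\bm{Q^{(\psi)}}$ but has zero incoming $\bm{Q}$-probability. Summing expected visit counts over all transient states gives $\mathbbm{E}_z[L]=(\bm{N^{(\psi)}}\bm{1})_{(z,z,0,0)}=\bm{m^{(\psi)}}_{(z,z,0,0)}$.

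Second, by~\eqref{eq:J_i_alt}, $J_i=L_i(L_i-1)/2$, so $\mathbbm{E}_z[J]=(\mathbbm{E}_z[L^2]-\mathbbm{E}_z[L])/2$. The remaining task is to show that $\bm{u^{(\psi)}}$ is the vector of second moments of the absorption time. For this I will use first-step analysis. Let $\tau_x$ be the absorption time from $x$. Then $\tau_x=1+\tau_{X_1}$, where $\tau_{X_1}=0$ if $X_1$ is absorbing. Squaring and taking expectations gives
\begin{equation*}
\bm{u}=\bm{1}+2\bm{Q}\bm{m}+\bm{Q}\bm{u}.
\end{equation*}
Solving, $\bm{u}=\bm{N}(\bm{1}+2\bm{Q}\bm{N}\bm{1})$, which is exactly the expression in the statement. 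This is the classical variance formula for absorbing chains. Invertibility of $\bm{I}-\bm{Q^{(\psi)}}$ follows from finiteness of the truncated space together with $d(r)>0$ and irreducibility-type reachability of $S=W$, so absorption is certain from every transient state.

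Third, the count $C_i$ is the sum over slots in the cycle of $y_t=\psi(H_t)$, which is a deterministic function of the state: it equals $1$ iff $\tilde\delta\ge n(s,w,r)$. Its expectation is therefore $\sum_x \bm{N}_{(z,z,0,0),x}\,y(x)$. The statement also carries an extra term with weight $\rho(S')$ at $\tilde\delta'=n-1$. I read this as covering a randomized transmission at the boundary slot, with probability $\rho$, or equivalently an off-by-one convention on when $\delta$ is incremented relative to the action. I would check the timing convention from Table~\ref{tab:transitions_original} and state that, for deterministic $\psi$, this term vanishes or reduces to the indicator sum. Making this bookkeeping match the paper's indexing is the step I expect to be the fiddliest, even though it is conceptually routine: weight each transient state's expected visits by its probability of a transmit action.

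Finally, $Z_{i+1}(\psi)$ is the absorbing state reached from $(Z_i,Z_i,0,0)$, and the cycles are conditionally independent given $Z_i$ by the strong Markov property at $T_i$. Hence $\{Z_i\}$ is a Markov chain with $P_{z,z'}=\bm{B^{(\psi)}}_{(z,z,0,0),(z')}$, using the standard absorption-probability formula $\bm{B}=\bm{N}\bm{U}$. Its stationary law $\pi_{z\mid\psi}$ then follows, and this completes the inputs to Corollaries~\ref{corol:AoII} and~\ref{corol:Rate}.
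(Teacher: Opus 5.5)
Your proposal is correct and follows essentially the same route as the paper. Both compute the first two moments of the absorption time of the augmented chain $\{H^{(\psi)*}_t\}$ started at $(z,z,0,0)$, via the fundamental matrix and first-step analysis, giving $\bm{m}=\bm{N}\bm{1}$ and $\bm{u}=\bm{N}(\bm{1}+2\bm{Q}\bm{N}\bm{1})$, then use $J_i=(L_i^2-L_i)/2$, weight the expected visit counts from $\bm{N^{(\psi)}}$ by each state's transmit probability (reading $\rho(S')$, as the paper does, as a randomization probability at $\tilde\delta'=n(S')-1$), and take $\bm{B^{(\psi)}}=\bm{N^{(\psi)}}\bm{U^{(\psi)}}$ for the embedded chain $\{Z_i\}$. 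One small correction to your remark on invertibility of $\bm{I}-\bm{Q^{(\psi)}}$: the property that guarantees certain absorption is $1-d(r)>0$ (a failed transmission, like a wait, keeps $W$ fixed while the irreducible source walks to it), not $d(r)>0$, though both hold under $d(r)\in(0,1)$ and the paper does not address this point at all.
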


\begin{proof}
    To prove the result, we analyze $\{H^{(\psi)*}_t\}$ exploiting its absorbing structure. We derive the expressions by characterizing the time to absorption as a function of the initial state. The details can be found in Appendix~\ref{apdx:proof2}.
\end{proof}

Theorem~\ref{thm:param} provides the necessary material to compute the closed-form expressions of the long-term average AoII $J_\psi$~\eqref{eq:J} and transmission rate $\bar{R}_\psi$~\eqref{eq:barR}.

We conclude this section with the following useful result.

\begin{proposition}\label{prop:y0}
    The optimal action at states with zero AoII, i.e., $(s,s,0,0)$ for all $s\in\{1,\dots,N\}$, is to wait ($y=0$).
\end{proposition}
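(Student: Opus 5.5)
The plan is to show that, at any state $(s,s,0,0)$, the transmit action produces exactly the same distribution over next states as the wait action. Transmitting there therefore only adds the penalty $\lambda$ in the Lagrangian problem~\eqref{eq:lagrange_problem}, or only consumes rate in the constrained problem~\eqref{eq:main_opt_problem}.

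First, I will specialize Table~\ref{tab:transitions_original} to $S_t=W_t=s$, $\delta_t=0$, $r_t=0$, and compare the two actions.

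\emph{Under $y_t=0$:}
\begin{itemize}
\item The next state is $(s',s,1,0)$ with probability $p_{s,s'}$ for each $s'\neq s$.
\item The next state is $(s,s,0,0)$ with probability $p_{s,s}$.
\end{itemize}

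\emph{Under $y_t=1$:} the row with $r_{t+1}=r_t+1$ requires $S_t\neq W_t$, so it is void here. What remains is the following.
\begin{itemize}
\item On a decoding failure, $S_{t+1}\notin\{s\}$ leads to $(s',s,1,0)$ with probability $p_{s,s'}(1-d(0))$.
\item On a decoding failure, $S_{t+1}=W_t=s$ leads to $(s,s,0,0)$ with probability $p_{s,s}(1-d(0))$.
\item On a decoding success, the receiver is set to $W_{t+1}=S_t=s$, which is unchanged. This gives $(s',s,1,0)$ with probability $p_{s,s'}d(0)$ and $(s,s,0,0)$ with probability $p_{s,s}d(0)$.
\end{itemize}
Summing the success and failure branches recovers exactly $p_{s,s'}$ and $p_{s,s}$. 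Hence $P(\cdot\mid (s,s,0,0),1)=P(\cdot\mid (s,s,0,0),0)$. The only care needed is the bookkeeping that the failure rows with $S_{t+1}\notin\{S_t,W_t\}$ and $S_{t+1}=W_t$ together cover all $s'$ once $S_t=W_t$. The action has no informational effect because the receiver already holds the correct value.

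Second, I will plug this into the Bellman equation~\eqref{eq:bellman_def}. The expected continuation $\sum P(\cdot)V(\cdot)$ is identical under both actions, so the transmit branch equals the wait branch plus $\lambda$. For $\lambda>0$ waiting is strictly optimal, and for $\lambda=0$ it is (weakly) optimal. This applies to both $\psi_{\lambda^+}$ and $\psi_{\lambda^-}$ of Theorem~\ref{thm:cmdp_policy}, and hence to their mixture $\psi^*$.

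Third, for the constrained problem directly, I will use a coupling argument. Given any policy $\psi$, define $\psi'$ that agrees with $\psi$ everywhere except that it waits at states in $\mathcal{C}$. Because the one-step kernels coincide on $\mathcal{C}$, the state processes under $\psi$ and $\psi'$ have the same law, so $\bar J_{\psi'}=\bar J_\psi$. Meanwhile $\bar R_{\psi'}\le\bar R_\psi$, so $\psi'$ remains feasible and is no worse.

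I expect the main (though mild) obstacle to be the first step: verifying carefully from the case conditions in the table that no transition with $r_{t+1}=1$ survives when $S_t=W_t$. If the table were read to allow a retransmission counter increment, the kernels would differ. In that case I would argue that states $(s,s,\cdot,1)$ are not reachable in any meaningful way, since $\delta=0$ resets $r$, and that they are therefore equivalent to $(s,s,0,0)$.
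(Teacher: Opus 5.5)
Your proposal is correct and uses the same key observation as the paper: at $(s,s,0,0)$ the success and failure branches of the transmit action recombine to exactly $p_{s,s'}$ and $p_{s,s}$, so the transmit branch equals the wait branch plus $\lambda$. The paper makes this comparison in the Bellman operator at every RVI iterate and then invokes RVI convergence, whereas you apply it directly to the Bellman equation and add a coupling argument for the constrained problem and the $\lambda=0$ tie; this is a slightly more direct packaging of the same argument.
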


\begin{proof}
    We utilize the RVI algorithm and show that at every iteration, the action that minimizes the value function is $y=0$. Since RVI converges to the true value function, it follows that the optimal action is also $y=0$. The proof is detailed in Appendix~\ref{apdx:proof_prop_y0}.
\end{proof}

A direct consequence of the above is the following.

\begin{corollary}
    The optimal thresholds satisfy $n(s,w,r)>0$ for all $s\in\{1,\dots,N\},w\in\{1,\dots,N\},r\in\{1,\dots,r_{\text{max}}\}$.
\end{corollary}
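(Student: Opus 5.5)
This is a direct consequence of Proposition~\ref{prop:y0}, so the plan is to translate that proposition into the threshold parametrization~\eqref{eq:y}. The only real content is deciding which states $(s,w,\delta,r)$ can actually carry $\delta=0$ under the dynamics in Table~\ref{tab:transitions_original}.

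\textbf{Step 1: states with $\delta=0$ are states $(s,s,0,0)$.} Every transition into $\delta_{t+1}=0$ in~\eqref{eq:transitions} lands in a state with $S_{t+1}=W_{t+1}$ and $r_{t+1}=0$. Conversely, whenever $S_t\neq W_t$ the next AoII equals $\delta_t+1\geq 1$. Hence, in the recurrent class, the only reachable states with $\delta=0$ are the regeneration states $\mathcal C$, i.e., $(s,s,0,0)$. A short induction on $t$, starting from $T_0$, makes this precise.

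\textbf{Step 2: compare with the threshold rule.} By~\eqref{eq:y}, a threshold $n(s,w,r)=0$ means $\psi(s,w,0,r)=1$, i.e., the policy transmits at zero AoII.
\begin{itemize}
\item For $w=s$, $r=0$, Proposition~\ref{prop:y0} says the optimal action at $(s,s,0,0)$ is $y=0$. This forces $n(s,s,0)\geq 1$.
\item For $r\geq 1$, no reachable state has $\delta=0$, by Step~1. Such a state would need $r\ge1$, which arises only from a failed retransmission with $S\neq W$, and then $\delta\geq 1$. So the threshold value at $\delta=0$ is never exercised.
\end{itemize}
We may therefore choose $n(s,w,r)\geq 1$ for these states without changing the induced chain, the cost, or the rate. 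Consequently some optimal threshold policy has all thresholds positive.

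\textbf{Anticipated obstacle.} The statement is phrased for $r\in\{1,\dots,r_{\max}\}$, where the relevant states are unreachable with $\delta=0$. The subtle point is arguing that the thresholds there are \emph{without loss of generality} positive, rather than forced positive. I would phrase the conclusion as holding up to modification on unreachable states. I would also note that any positive value of $n$ is consistent with the optimal actions in reachable states, since those states have $\delta\ge1$ and the threshold only matters through the comparison $\delta\geq n$.
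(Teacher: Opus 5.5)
Your argument is correct and starts from the same fact as the paper, Proposition~\ref{prop:y0}, but it adds a step the paper omits. The paper calls the corollary a direct consequence of that proposition, yet the proposition only concerns the states $(s,s,0,0)$. With the $0$-based packet count of Table~\ref{tab:transitions_original} it says nothing about any $r\in\{1,\dots,r_{\max}\}$. Even with the $1$-based count apparently used in Fig.~\ref{fig:lambdaRVI}, it only covers the diagonal $w=s$.

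Your Step~1 handles every $(s,w,r)\neq(s,s,0)$: every transition into $\delta=0$ lands in some $(s,s,0,0)$, and $r$ increases only after a failed transmission with $S_{t+1}=S_t\neq W_t$, so $\delta\geq 1$. For the statement as written, this reachability step does essentially all the work; Proposition~\ref{prop:y0} is needed only for the first-packet diagonal thresholds.

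Your ``without loss of generality'' reading is necessary, not cosmetic. At an unreachable state $(s,w,0,r)$ with $s\neq w$, the Bellman-optimal action can be to transmit (e.g.\ for a source with dominant diagonal and small $\lambda$). Alg.~\ref{alg:RVI} also starts its $\delta$-loop at $0$, so it may return $n(s,w,r)=0$ there. The corollary holds only after your modification on unreachable states, which the paper leaves implicit.

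Three small repairs:
\begin{itemize}
\item The sentence ``whenever $S_t\neq W_t$ the next AoII equals $\delta_t+1$'' is false: waiting with $S_{t+1}=W_t$ resets the AoII to $0$. The correct converse conditions on $S_{t+1}\neq W_{t+1}$, and your first sentence of Step~1 suffices anyway.
\item State the second bullet for all $(s,w,r)\neq(s,s,0)$, not only $r\geq 1$, so it also covers $w\neq s$ at the first packet.
\item The modification must be $n\mapsto\max\{n,1\}$. It is not true that ``any positive value of $n$'' is consistent with the optimal actions, because for reachable $\delta\geq 1$ the comparison $\delta\geq n$ depends on which positive value is chosen.
\end{itemize}
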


\subsection{Approximation of \texorpdfstring{$\mathcal{F}(S,W,r)$}{F(S,W,r)}}

The threshold structure of $\mathcal{F}(S,W,r)$ can be exploited to reduce the complexity of dynamic programming algorithms, omitting computations for the states with $\delta$ above the current threshold. It is important to note that the state space of the Lagrangian MDP is infinite because the AoII is unbounded, rendering dynamic programming algorithms intractable. To address this issue, we adopt the Approximating Sequence Method~\cite{bib:sennott_asm}. That is, truncate the state space by limiting the AoII component to at most $\Delta$. The truncated MDP is constructed by replacing the transitions $(s,w,\Delta,r)\rightarrow (s',w',\Delta+1,r')$ by $(s,w,\Delta,r)\rightarrow (s',w',\Delta,r')$. Let the resulting MDP and its optimal average cost be denoted by $\mathcal{M}_\lambda^{(\Delta)}$ and $g^{(\Delta)}_\lambda$, respectively.

\begin{theorem}\label{thm:ASM}
    Let $e^{(\Delta)}$ realize the optimal policy of $\mathcal{M}_\lambda^{(\Delta)}$. Then, \begin{enumerate}[label=\roman*)]
        \item The limit point of $e^{(\Delta)}$ is optimal for $\mathcal{M}_\lambda$.
        \item $\lim_{\Delta\rightarrow\infty}g^{(\Delta)}_\lambda=g_\lambda$.
    \end{enumerate}
\end{theorem}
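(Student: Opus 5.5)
Theorem~\ref{thm:ASM} follows from Sennott's convergence theorem for the Approximating Sequence Method~\cite{bib:sennott_asm}, so the plan is to check its hypotheses for the sequence $\{\mathcal{M}_\lambda^{(\Delta)}\}_{\Delta}$. The truncation redirects the excess mass $(s,w,\Delta,r)\to(s',w',\Delta+1,r')$ onto $(s',w',\Delta,r')$, which is an ``augmentation'' type approximating sequence. For such sequences the relevant conditions are: (a) the truncated transition laws converge to those of $\mathcal{M}_\lambda$ as $\Delta\to\infty$; (b) there is a reference state $z_0$ and a stationary policy whose induced chain is positive recurrent with finite average cost; and (c) Sennott's condition AC: the relative value functions $h^{(\Delta)}_\alpha(x)=V^{(\Delta)}_\alpha(x)-V^{(\Delta)}_\alpha(z_0)$ of the discounted truncated problems are bounded below by $-M$ and above by a finite function $G(x)$, uniformly in $\Delta$ and in $\alpha$ near $1$. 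Given these, the theorem yields both claims: any limit point of the optimal stationary policies $e^{(\Delta)}$ is average-optimal for $\mathcal{M}_\lambda$, and $g^{(\Delta)}_\lambda\to g_\lambda$.

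Condition (a) is immediate, because only states with $\delta=\Delta$ are modified and every fixed state is eventually left untouched. For (b) I will take the reference state $z_0=(s_0,s_0,0,0)\in\mathcal{C}$ and the always-transmit policy. From any state, each slot ends the incorrect period with probability at least $\min_r d(r)>0$, after which the chain reaches $\mathcal{C}$. Hence the hitting time of $\mathcal{C}$ is dominated by a geometric random variable. The cost per slot is at most $\delta+t+\lambda$, so the expected cost to reach $\mathcal{C}$ is at most affine in $\delta$.

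One technical point is that reaching $\mathcal{C}$ is not the same as reaching the specific state $z_0$. To handle this I will either assume the source chain is irreducible and aperiodic, so $z_0$ is reached from $\mathcal{C}$ in finite expected time under ``wait'', or work with the finite set $\mathcal{C}$ as Sennott's finite reference set. This also uses the unichain property already noted for $\mathcal{M}$.

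For the lower bound in (c), I will first show that $V^{(\Delta)}_\alpha(s,w,\delta,r)$ is nondecreasing in $\delta$, using a coupling argument or induction on value iteration; truncation preserves this because it only lowers $\delta$. Then from $z_0$ the process must incur at least the cost of being at its current state, so $h\ge -M$ holds with $M$ given by the bounded costs of moving within $\mathcal{C}$.

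For the upper bound, I will bound $h^{(\Delta)}_\alpha(x)$ by the expected cost under always-transmit from $x$ until hitting $z_0$. This quantity is finite and affine in $\delta$ independently of $\Delta$, because truncation can only decrease $\delta$ and so only decrease the costs.

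I expect the main obstacle to be this uniform-in-$\Delta$ control in (c). The difficulty is showing that redirecting mass at the boundary never increases first-passage costs, which amounts to a stochastic-ordering or monotonicity argument in $\delta$ applied to the truncated chains. Once that is established, the rest is a direct citation of~\cite{bib:sennott_asm}.
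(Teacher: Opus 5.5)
Your proposal has a genuine gap: it verifies only half of what Sennott's convergence theorem needs. Your condition (c) is ASM1. Theorem 2.2 of the cited reference, which the paper applies, also requires ASM2: $\limsup_{\Delta\to\infty} g^{(\Delta)}_\lambda<\infty$ and $\limsup_{\Delta\to\infty} g^{(\Delta)}_\lambda\le g_\lambda$. Neither (a) nor (b) supplies this.

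ASM1 gives only one direction. Take the vanishing-discount limit along a subsequence, using Fatou's lemma (justified by $h\ge -L$). A limit point $e$ of $e^{(\Delta_k)}$ then satisfies an average-cost optimality inequality with constant $g^*=\lim_k g^{(\Delta_k)}_\lambda$. Hence $g_\lambda\le \bar J_e\le g^*$, i.e.\ $g_\lambda\le\liminf_\Delta g^{(\Delta)}_\lambda$. To conclude that $e$ is optimal and that $g^{(\Delta)}_\lambda\to g_\lambda$, you need $g^*\le g_\lambda$. This is exactly where the choice of where the excess mass is redirected matters: an augmentation that dumped it on expensive states would give no such inequality. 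So ``given these, the theorem yields both claims'' does not follow.

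You already have the right tool, but it is aimed at the wrong target. The paper obtains ASM2 from Sennott's Corollary 5.2 for augmentation-type sequences. That corollary reduces ASM2 to $v_{a,n}(s,w,\Delta,r)\le v_{a,n}(s,w,\delta,r)$ for $\delta>\Delta$, where $v_{a,n}$ are the finite-horizon discounted values of the \emph{original} $\mathcal{M}_\lambda$. This monotonicity in $\delta$ follows by induction on the value-iteration recursion: the cost increases in $\delta$, and the next AoII is $0$ or $\delta+1$ with $\delta$-independent probabilities. Your planned monotonicity of the truncated $V^{(\Delta)}_\alpha$ holds inside each truncated model but never compares it with $\mathcal{M}_\lambda$.

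A direct alternative is a coupling. Because the $(S,W,r)$ transitions do not depend on $\delta$, take any policy $\theta$ of $\mathcal{M}_\lambda$. In $\mathcal{M}_\lambda^{(\Delta)}$, run the history-dependent policy that reconstructs the untruncated AoII from the observed resets and plays $\theta$. The actions and $(S,W,r)$ paths can be coupled to coincide, and the truncated AoII is pathwise $\min(\delta_t,\Delta)\le\delta_t$. Hence $g^{(\Delta)}_\lambda\le g_\lambda$ for every $\Delta$.

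There are also two smaller, fixable slips in (b):
\begin{itemize}
\item A successful decoding restores correctness only if the source also stays put, which has probability $d(r)p_{s,s}$, and $p_{s,s}$ may be $0$. So the per-slot bound $\min_r d(r)$ is false. Geometric tails still hold, because under always-transmit $z_0$ is reachable from every state of the finite $(S,W,r)$ chain, using $d(r)<1$ and irreducibility.
\item ``Wait'' cannot carry $(s,s,0,0)$ to $(s_0,s_0,0,0)$ for $s\neq s_0$, since $W$ is frozen under wait.
\end{itemize}

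On the positive side, your lower-bound route for ASM1 (monotonicity plus bounded passage costs from $z_0$) is more careful than the paper's choice $L=\max_m V^{\Delta}_a(m)$. That constant grows like $1/(1-a)$, so it is not uniform as $a\to1$.
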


\begin{proof}
    We verify Assumptions 1-2 in~\cite{bib:sennott_asm}. Then, Theorem 2.2 of the same reference is applied to prove the result. The details can be found in Appendix~\ref{apdx:ASM}.
\end{proof}

From Theorem~\ref{thm:ASM}, we infer that for sufficiently large $\Delta$, the approximation on the truncated MDP will be adequately close to the true optimal policy. An RVI algorithm that realizes this approximation is shown in Alg.~\ref{alg:RVI} in Appendix~\ref{apdx:algo}.

\subsection{Approximation of the Optimal Policy \texorpdfstring{$\psi^*$}{}}

Given the above results, the optimal policy can be constructed via a bisection search, whose computational complexity is at the order of~$O(\log \lambda^+)$~\cite{bib:algorithms}. The algorithm adopted in this paper is an adaptation of~\cite[Alg. 2]{9686027}. The pseudocode is given in Alg.~\ref{alg:bisection} in Appendix~\ref{apdx:algo}.

\section{\texorpdfstring{$\mathcal{F}(1)$}{F(1)}: The Class of Single-Threshold Policies}

The previous section addressed the construction of general multiple-threshold policies that depend on all system parameters. The construction of these policies relies on an efficient RVI algorithm that leverages the threshold structure to skip computations. However, the state space is still large, with a state space of size $|\mathcal{X}|=N^2\cdot \Delta\cdot r_{max}$. Moreover, each Bellman evaluation requires $O(N)$ operations, scaling the worst-case complexity of the RVI algorithm into $O(N^3\cdot \Delta\cdot r_{max})$. 

In this section, we introduce $\mathcal{F}(1)$, the class of single-threshold policies that apply the same threshold everywhere. Therein, the state space scales linearly with $N$, yielding a worst-case RVI complexity of $O(N^2\cdot \Delta\cdot r_{max})$. Notably, the bisection method can directly search for the threshold $n$ without employing RVI, thereby offering a substantial reduction in complexity.

\begin{definition}~\\
    $\mathcal{F}(1)$ is the class of policies, where each $\psi\in\mathcal{F}(1)$ is a threshold function of the AoII $\delta$, 
    \begin{equation}
        \psi(\delta) = \begin{cases}
            0 \quad &\text{if }\delta<n,\\
            1 \quad &\text{if }\delta\geq n.
        \end{cases}
    \end{equation}
\end{definition}

Since $\mathcal{F}(1)$ is a special case of $\mathcal{F}(S,W,r)$, the results of Sec.~\ref{sec:F(s,w,r)analysis} hold. Specifically, Corollaries~\ref{corol:AoII}-\ref{corol:Rate} and Theorem~\ref{thm:param} enable the computation of the long-term average AoII and transmission rate of any policy $\psi\in\mathcal{F}(1)$. To simplify notation, let $R_{n}$ equal the long-term average transmission rate $R_{\psi}$ when $\psi\in\mathcal{F}(1)$ with threshold $n$. Algorithm~\ref{alg:bisection_n} in Appendix~\ref{apdx:algo} summarizes the search algorithm.

\section{Numerical Results}\label{sec:results}

Figure~\ref{fig:pol} illustrates the approximated policies of $\mathcal{F}(S,W,r)$ and $\mathcal{F}(1)$, for the same source model as in Fig.~\ref{fig:lambdaRVI} and transmission rate $R=0.1$.

\begin{figure}
    \centering
    \begin{minipage}[c]{0.65\linewidth}
        \centering
        $\mathcal{F}(S,W,r)$\\[4pt]
        \setlength{\tabcolsep}{5pt}
        \renewcommand{\arraystretch}{1.2}
        \begin{tabular}{|c|c|c|c|}
            \hline
            \cellcolor{orange!15!purple!85}($\delta_\text{max}$) & \cellcolor{orange!45!purple!55}6 & \cellcolor{orange!30!purple!70}9 & \cellcolor{orange!35!purple!65}8 \\
            \hline
            \cellcolor{orange!40!purple!60}7 & \cellcolor{orange!15!purple!85}($\delta_\text{max}$) & \cellcolor{orange!35!purple!65}8 & \cellcolor{orange!45!purple!55}6 \\
            \hline
            \cellcolor{orange!60!purple!40}3 & \cellcolor{orange!60!purple!40}3 & \cellcolor{orange!15!purple!85}($\delta_\text{max}$) & \cellcolor{orange!50!purple!50}5 \\
            \hline
            \cellcolor{orange!40!purple!60}7 & \cellcolor{orange!50!purple!50}5 & \cellcolor{orange!35!purple!65}8 & \cellcolor{orange!15!purple!85}($\delta_\text{max}$) \\
            \hline
        \end{tabular}
    \end{minipage}
    \begin{minipage}[c]{0.15\linewidth}
        \centering
        $\mathcal{F}(1)$\\[4pt]
        \renewcommand{\arraystretch}{1.2}
        \begin{tabular}{|c|c|c|c|}
            \hline
            \cellcolor{orange!35!purple!65}8 \\
            \hline
            \cellcolor{orange!35!purple!65}8 \\
            \hline
            \cellcolor{orange!35!purple!65}8 \\
            \hline
            \cellcolor{orange!35!purple!65}8 \\
            \hline
        \end{tabular}
    \end{minipage}
    \caption{The stationary threshold policy component $\psi_{\lambda^+}$ of the randomized policy $\psi^*$ for each of the threshold policy classes introduced in the previous sections. For the policy in $\mathcal{F}(S,W,r)$, only the thresholds for packet count $r=1$ are shown. The transmission rate is $R=0.1$, the maximum transmission packet count is $r_\text{max}=2$, and the probability of successful decoding $d(r)$ is $0.5$ and $0.75$ for $r=1$ and $r=2$, respectively. The source transition probabilities $p_{i,j}$ are given by the transition matrix $\left[ \begin{smallmatrix} \mathstrut
        0.52 & 0.12 & 0.18 & 0.18 \mathstrut\\ 
        0.17 & 0.57 & 0.17 & 0.09 \mathstrut\\ 
        0.03 & 0.06 & 0.72 & 0.19 \mathstrut\\ 
        0.16 & 0.10 & 0.18 & 0.56 \mathstrut
    \end{smallmatrix} \right]$.}
    \label{fig:pol}
\end{figure}

\begin{figure}
    \centering
    \includegraphics[width=0.93\linewidth]{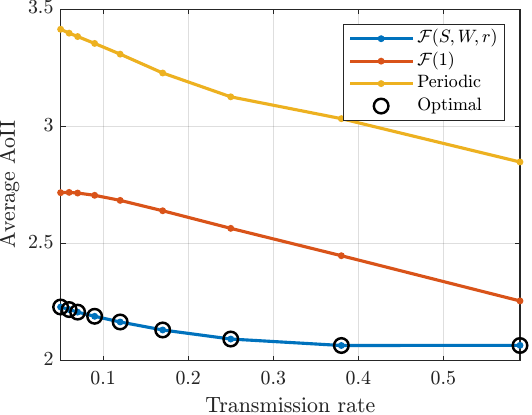}
    \caption{Average AoII versus transmission rate for a random source - $N=4$}
    \label{fig:aoii1}
    ~\\
    \includegraphics[width=0.93\linewidth]{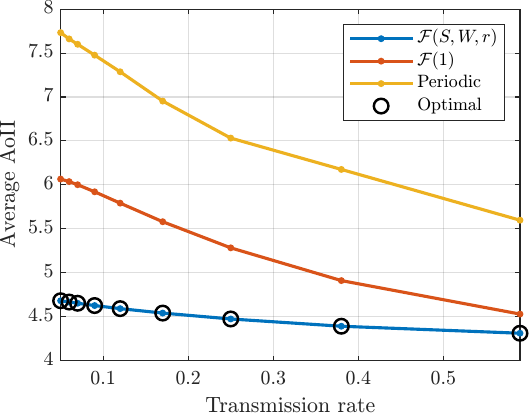}
    \caption{Average AoII versus transmission rate for a random source - $N=8$}
    \label{fig:aoii2}
    ~\\
    \includegraphics[width=0.93\linewidth]{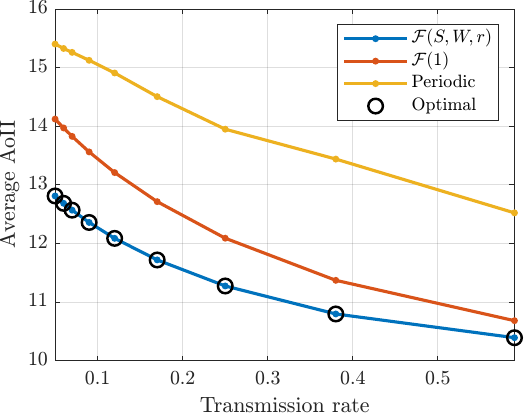}
    \caption{Average AoII versus transmission rate for a random source - $N=16$}
    \label{fig:aoii3}
\end{figure}

Hereafter, each experiment is carried out on a distinct, randomly drawn Markov source. For the experiments to be informative, we bias the diagonal elements of $P$ to be the largest element in its row. Otherwise, there is a high probability that the selected Markov source contains multiple states where transmissions are never preferable, making the comparison across the different source models complex. In practice, this assumption implies that the duration of a transmission (one time slot) is shorter than the average time required for the source to change value. In our experiments, each row possesses $N$ elements that are uniformly distributed and normalized, and the diagonal element is assigned the largest of those elements.

Similar to~\cite{bib:aoi_harq} and~\cite{bib:kbountro_harq}, the probability of failed decoding is modeled as an exponentially decreasing function, i.e.,
\begin{equation}
    p(r) = 1 - p_e c^r\,,\quad \text{for $\ 0\leq r\leq r_{max}$}\,,
\end{equation}
where $p_e$ is the initial error rate and $c\in(0,1]$ is the decaying error rate constant. To facilitate comparison of the transmission policies, the chosen parameters are constant at $c=0.5$, $p_e=0.5$, and $r_{max}=2$.

Figures~\ref{fig:aoii1}~-~\ref{fig:aoii3} illustrate the average AoII achieved by the proposed policies versus the transmission rate, together with the AoII of the optimal policy without the constraint of the threshold structure. Additionally, results are drawn for a periodic policy that transmits every $\lceil 1/R \rceil$ time slots. Each figure is dedicated to a different cardinality, \mbox{$N=4,8,16$}. 

Our results confirm that searching within the $\mathcal{F}(S,W,r)$ threshold class incurs no loss of optimality, as its performance perfectly matches the globally optimal policy. This approach yields strict improvements over all alternatives, especially in the lower transmission-rate region. Finally, we observe that even the elementary single-threshold policy $\mathcal{F}(1)$ dramatically outperforms periodic transmissions.

\section{Conclusion}

This work investigates transmission policies for minimizing the AoII metric of generic discrete-time Markov sources. The considered communication model employs an HARQ protocol for error correction under a constrained transmission rate. We demonstrate that the optimization problem can be formulated as a CMDP, and that there exists an optimal policy which is a randomized mixture of two stationary policies. We propose a method for deriving feasible multiple-threshold policies, where transmissions occur as long as the AoII is larger than state-dependent thresholds that rely on the source, receiver, and packet count. The resulting system is analyzed as a Markov renewal process, allowing for the derivation of closed-form expressions for both the long-term average AoII and the average transmission rate. The threshold allocation is carried out via a relative value iteration algorithm acting on a truncated state space, paired with a bisection search to approximate the optimal transmission penalty. For reduced complexity, a simpler single-threshold policy can be derived that significantly scales down the state space. The results demonstrate the superiority of threshold-based policies over conventional periodic transmission schemes and highlight the potential of multiple-threshold policies in AoII optimization problems.

\bibliographystyle{IEEEtran}
\bibliography{IEEEabrv,biblio}

\clearpage

\appendix

\subsection{Proof of Theorem~\ref{thm:cmdp_policy}}\label{apdx:proof1}

\begin{proof}
First, we prove that assumptions 1-5 of~\cite{bib:sennott} hold for our problem. Under those properties, the results of~\cite{bib:sennott} yield the existence and structure of an optimal policy. 

\begin{definition}[Definition 2.3 of~\cite{bib:sennott}]
Let $G\subset \mathcal{M}$ be a non-empty subset of states of the CMDP. Given a state $m\in\mathcal{M}$, let $\mathcal{R}(m,G)$ be the class of policies $\psi$ such that $P_{\psi}(M_t\in G \text{ for some } t\geq 1\mid M_0 = m)=1$ and the expected time $T_{m,G}$ of the first passage from $m$ to $G$ under $\psi$ is finite. Let $\mathcal{R^*}(m,G)$ be the class of policies $\psi\in \mathcal{R}(m,G)$ such that, in addition, the expected average AoII $\bar{J}_{m,G}(\psi)$ and the expected transmission rate $\bar{R}_{m,G}(\psi)$ of the first passage from $m$ to $G$ are finite.
\end{definition}

Hereafter, we state and prove the assumptions mentioned above for our problem.

\begin{assumption}
For all $b\!>\!0$, the set $G(b)\triangleq \{M\!=\!(s,w,\delta,r)\mid \text{ there exists an action } y \text{ such that } \delta+y\leq b\}$ is finite.
\end{assumption}
\begin{proof}
Since $\delta\in\mathbbm{N}$, the states with $\delta<w$ are always finite, which implies that $G(b)$ is also finite.
\end{proof}

\begin{assumption}There exists a stationary policy $e$ that induces a Markov chain with the following properties: the state space $\mathcal{M}$ consists of a single (non-empty) positive recurrent class $K$ and a set $U$ of transient states such that $e\in\mathcal{R^*}(u,K)$, for any $u\in U$. Moreover, both the average AoII $\bar{J}_e$ and the average transmission rate $\bar{R}_e$ on $K$ are finite.
\end{assumption}
\begin{proof}
Consider the policy $e(m)=0$ for all $m\in\mathcal{M}$, i.e., the always-wait policy. Let $m_0=(s_0,w_0,\delta_0,r_0)$. It can be seen that the policy-induced Markov chain $\{M_e(t)\}=\{s_t,w_t,\delta_t,r_t\}$, $t\in\mathbbm{N}$ has a single positive recurrent class $K_{\psi}=\{(s,w,\delta,r):s\in\{1,\dots,N\},w=w_0,\delta\in\mathbbm{N},r=0\}$. The transient set is $U=\{(s,w,\delta,r):s\in\{1,\dots,N\},w=w_0,\delta\in\mathbbm{N},r>0\}$, i.e., they differentiate only on the $r$ component. Recall that the wait action always reverts $r$ to zero, which implies that the expected time of the first passage $T_{u,K}\leq 1$. In addition, the expected transmission rate of the first passage $\bar{R}_{u,K}(e)=0$, while the expected average AoII of the first passage $\bar{J}_{u,K}(e)$ is trivially finite. Hence, $e\in\mathcal{R^*}(u,K)$.

In addition, the average transmission rate $\bar{R}_{e} = 0$. To show that $\bar{J}_{e}$ is finite, notice that the AoII $\delta_t$ reverts to zero when $s_t=w_0$, and grows linearly otherwise. We infer that the Markov chain forms a renewal process wherein the regeneration (stopping) times are $\chi_{i+1}\triangleq \inf\{t>\chi_i:s_t=w_0\}$ and the cycle lengths $\mu_i\triangleq \chi_{i+1}-\chi_i$. Therefore, the cumulative AoII in the $i$-th cycle equals 

\begin{equation}
    g_i=\sum_{n=0}^{\mu_i-1}n=\frac{\mu_i^2-\mu_i}{2}\,.
\end{equation}
Hence,
\begin{equation}
    \mathbbm{E}[g]=\frac{\mathbbm{E}[\mu^2]-\mathbbm{E}[\mu]}{2}\,.
\end{equation}
By the renewal-reward theorem,
\begin{equation}
    \bar{J}_e(m_0)=\frac{\mathbbm{E}[g]}{\mathbbm{E}[\mu]} = \frac{\mathbbm{E}[\mu^2]-\mathbbm{E}[\mu]}{2\mathbbm{E}[\mu]}\,.
\end{equation}
Recall that the source Markov chain $\{S_t\}$ is finite and irreducible, which implies that $\mathbbm{E}[\mu^2]<\infty$. Therefore, $\bar{J}_e(m_0)<\infty$.
\end{proof}

\begin{assumption}Given any two states $m\neq m'\in \mathcal{M}$, there exists a policy $\psi$ (a function of $m$ and $m'$) such that $\psi \in \mathcal{R^*}(m,m')$.
\end{assumption}
\begin{proof}
Again, consider the policy $\psi(m)\!=\!1$ for all $m\in\mathcal{M}$, and recall that the induced Markov chain consists of a single positive recurrent class that is exactly the state space $\mathcal{M}$. Thus, any two states communicate. The average AoII and the transmission rate of the first passage are trivially finite, i.e., $\psi \in \mathcal{R^*}(m,m')$.
\end{proof}

\begin{assumption}
If a deterministic policy $f$ has at least one positive recurrent state, then it has a single positive recurrent class $K$. Moreover, if $m_0\notin K$, then $\psi\in\mathcal{R^*}(m_0,K)$.
\end{assumption}
\begin{proof}
If $f$ transmits for all source states, then there exists a single positive recurrent class $K=\mathcal{M}$ and the state $m_0$ is always in $K$. Assume that $f$ never transmits for some subset of the source states $S_{I}\subset\{1,\dots,N\}$. Denote the complement set with $S_{T}$. Then, the induced Markov chain consists of a single positive recurrent $K=\{(s,w,\delta,r):s\in\{1,\dots,N\},~w\in S_{T},~\delta\in\mathbbm{N},~r\in\{1,\dots,r_{max}\}\}$. Let $m_0=(s_0,w_0,\delta_0,r_0)\notin K$. The source state $s_0$ communicates with $S_T$. Therefore, $f$ is bound to transmit some source state from $S_T$ in finite time, and the receiver is bound to decode one in finite time. At this time, the chain reaches the class $K$.
\end{proof}

\begin{assumption}
There exists a policy $\psi$ such that $\bar{J}_\psi(m_0) \!<\! \infty$ and $\bar{R}_\psi(m_0) \!<\! R$.
\end{assumption}
\begin{proof}
This is shown in the proof of Assumption 3.
\end{proof}

Having proven assumptions 1-5, the results~\cite[Thm. 2.5, Prop. 3.2, Lemmas 3.4, 3.7, 3.9, 3.10, 3.12]{bib:sennott} also hold for our CMDP. 

By definition, $\lambda^-$ and $\lambda^+$ converge to the same $\lambda^*$ from below and above, respectively. By~\cite[Lemma 3.4, Lemma 3.7]{bib:sennott}, the policies $\psi_{\lambda^-}$ and $\psi_{\lambda^+}$ also converge and have the same cost $g_{\lambda^*}$. Since $\mathcal{C}$~\eqref{eq:regeneration} is positive recurrent for both policies, a randomized policy that randomizes at $\mathcal{C}$ is also $\lambda^*$-optimal~\cite[Lemma 3.9]{bib:sennott}. Finally, applying~\cite[Lemmas 3.10, 3.12]{bib:sennott}, the optimality of the randomized policy $\psi^*$ for the CMDP is deduced.

\end{proof}

\subsection{Proof of Theorem~\ref{thm:param}}\label{apdx:proof2}
\begin{proof}

    To simplify the notation throughout this proof, we omit the policy $\psi$ from the notation. All quantities below are understood to depend on $\psi$.

    Let $\bm{\tau}_{H^*_t}$ denote the time until absorption starting from the state $H^*_t$. More precisely,

    \begin{equation}
        \tau_{H^*_t} \triangleq \min\{n\geq t:H^*_n\in \mathcal{Z}, n>t\}-t\,.
    \end{equation}
    Clearly, 
    \begin{equation}
        L_i=\tau_{H^*_{T_i}}, \quad\forall~i\in\mathbbm{N}.
    \end{equation}

    Assume that $H^*_t$ is a transient state, as is the case with cycle-starting states. First-step conditioning of $\tau{(H^*_t)}$ yields

    \begin{align}
        \bm{\tau}_{H^*_t} &= 1 + \bm{\tau}_{H^*_{t+1}}\label{eq:tau}\\
        \Rightarrow\bm{\tau}_{H^*_{t}}^2 &= 1 + 2\bm{\tau}_{H^*_{t+1}} + \bm{\tau}_{H_{^*t+1}}^2.\label{eq:tauSq}
    \end{align}

    Let $\mathbbm{E}_{h^*}[\bm{\tau}]\triangleq \mathbbm{E}[\bm{\tau}_{H^*_{t}}\mid H^*_t=h^*]$ and $\mathbbm{E}_{h^*}[\bm{\tau}^2]\triangleq \mathbbm{E}[\bm{\tau}^2_{H^*_{t}}\mid H^*_t=h^*]$, $h^*\in\mathcal{H^*}$. Taking the conditional expectation on either sides of~\eqref{eq:tau} and~\eqref{eq:tauSq} returns the following.
    
    \begin{equation}\label{eq:Etau}
    \begin{aligned}
        \mathbbm{E}[\bm{\tau}_{H^*_{t}}\mid H^*_t=h^*] &=  1 + \mathbbm{E}[\bm{\tau}_{H^*_{t+1}}\mid H^*_t=h^*] \\
        \Rightarrow \mathbbm{E}_{h^*}[\bm{\tau}] &= 1 + \sum_{h'\in\mathcal{H^*}}\bm{Q}_{{h^*},h'}\mathbbm{E}_{h'}[\bm{\tau}],
    \end{aligned}
    \end{equation}

    \begin{equation}\label{eq:EtauSq}
    \begin{aligned}
        \mathbbm{E}[\bm{\tau}_{H^*_{t}}^2\mid H^*_t=h^*] &= 1 + 2\mathbbm{E}[\bm{\tau}_{H^*_{t+1}}\mid H^*_t=h^*] \\
                                                        &\phantom{=}+ \mathbbm{E}[\bm{\tau}_{H^*_{t+1}}^2\mid H_t^*=h^*] \\
        \Rightarrow \mathbbm{E}_{h^*}[\bm{\tau}^2] &=  1 + 2\sum_{h'\in\mathcal{H^*}}\bm{Q}_{h^*,h'}\mathbbm{E}_{h^*}[\bm{\tau}] \\
                                                    &\phantom{=}+ \sum_{h'\in\mathcal{H^*}}\bm{Q}_{h^*,h'}\mathbbm{E}_{h^*}[\bm{\tau}^2].
    \end{aligned}
    \end{equation}
    
    Define the column vectors $\bm{m}$ and $\bm{v}$ with elements $\bm{m}_{h^*}\triangleq \mathbbm{E}_{h^*}[\bm{\tau}]$ and $\bm{v}_{h^*}\triangleq \mathbbm{E}_{h^*}[\bm{\tau}^2]$, respectively. In addition, define the column vector $\bm{1}\triangleq[1~1~\cdots~1]^T$. Then,~\eqref{eq:Etau} and~\eqref{eq:EtauSq} can be written more concisely in matrix-vector form as,

    \begin{equation}\label{eq:m}
    \begin{aligned}
        \bm{m} &= \bm{1} + \bm{Qm}\\
        \Rightarrow\bm{m} &= (\bm{I}-\bm{Q})^{-1}\bm{1}\\
        \Rightarrow\bm{m} &= \bm{N}\bm{1},
    \end{aligned}
    \end{equation}
    
    \begin{equation}
    \begin{aligned}
        \bm{u} &= \bm{1} + 2\bm{Qm} + \bm{Qu}\\
        \Rightarrow \bm{u} &= (\bm{I}-\bm{Q})^{-1}(\bm{1} + 2\bm{Q}\bm{m})\\
        \Rightarrow \bm{u} &= \bm{N}(\bm{1} + 2\bm{Q}\bm{N}\bm{1}).
    \end{aligned}
    \end{equation}

    First, it can be seen that
    \begin{equation}
        \mathbbm{E}_z[L] = \bm{m}_{(z,z,0,0)}.    
    \end{equation}
    
    Moreover, 
    \begin{equation}
    \begin{aligned}
        \mathbbm{E}_z[J] &= \mathbbm{E}\left[\left.\sum_{k=0}^{T_{i+1}-T_{i}-1}k~\right\vert~ Z_i=z\right] \\
        &= \mathbbm{E}\left[\left.\frac{(T_{i+1}-T_i)^2-(T_{i+1}-T_i)}{2}~\right\vert~ Z_i=z\right]\\
        &= \mathbbm{E}\left[\left.\frac{L_i^2-L_i}{2}~\right\vert~ Z_i=z\right]\\
        &= \frac{ \mathbbm{E}[L_i^2\mid Z_i=z]- \mathbbm{E}[L_i\mid Z_i=z]}{2}\\
        &= \frac{ \mathbbm{E}_{(z,z,0,0)}[\bm{\tau}^2]- \mathbbm{E}_{(z,z,0,0)}[\bm{\tau}]}{2}\\
        &=  \frac{ \bm{u}_{(z,z,0,0)}- \bm{m}_{(z,z,0,0)}}{2}.
    \end{aligned}
    \end{equation}

    To compute $\mathbbm{E}_z[C]$, we invoke the expectation of~\eqref{eq:y} over the visited states until absorption, starting from the state $(z,z,0,0)$. Since $\bm{N}_{(z,z,0,0),(S',W',\tilde\delta',r')}$ is the average number of visits in $(S',W',\tilde\delta',r')$ before absorption, it follows that
    
    \begin{equation}\label{eq:active}
    \begin{aligned}
        \mathbbm{E}_z[C]&=\mathbbm{E}\left[\left.\sum_{t=T_i}^{T_{i+1}-1}y_t~\right\vert~ Z_i=z\right]\\
        &=\mathbbm{E}\left[\sum_{{(S',W',\tilde\delta',r')}}\psi(S',\tilde\delta',r')\bm{N}_{(z,z,0,0),(S',W',\tilde\delta',r')}\right]\\
        &=\sum_{{(S',W',\tilde\delta',r')}}\mathbbm{E}\left[\psi(S',\tilde\delta',r')\right]\bm{N}_{(z,z,0,0),(S',W',\tilde\delta',r')}\\
        &=\sum_{{(S',W',\tilde\delta',r')}}\hspace{-5.7pt}P\!\left(\psi(S',\tilde\delta',r')\!=\!1\right)\!\bm{N}_{(z,z,0,0),(S',W',\tilde\delta',r')}\\
        &=\sum_{\substack{{(S',W',\tilde\delta',r')}\,:\,\\\tilde\delta'=n(S')-1}}\hspace{-2pt}\rho(S')\bm{N}_{(z,z,0,0),(S',W',\tilde\delta',r')} \\
        &\phantom{=}+ \sum_{\substack{{(S',W',\tilde\delta',r')}\,:\,\\\tilde\delta'\geq n(S')}}\bm{N}_{(z,z,0,0),(S',W',\tilde\delta',r')}\,.
    \end{aligned}
    \end{equation}
    
    Now, assume that $Z_{{i+1}}=z$ is the ending (absorbing) state of the $i$-th cycle. Then, the starting state of the next is $H_{T_{i+1}}=(z,z,0,0)$. Let $P^Z$ denote the transition matrix of $\{Z_i\}$. It follows that $P^Z$ can be constructed as

    \begin{equation}
        P^Z_{z,z'} = \bm{B}_{(z,z,0,0),(z')}\,.
    \end{equation}

\end{proof}

\subsection{Proof of Theorem~\ref{thm:param}}\label{apdx:ASM}
\begin{proof}
    The result follows from~\cite[Thm. 2.2]{bib:sennott_asm}, by which it sufficed to show that the assumptions~\cite[ASM 1, ASM 2]{bib:sennott_asm} hold for our problem. Before proving the assumptions, we define some necessary quantities.

    For this proof, the quantities in $\mathcal{M}^{(\Delta)}$ are superscripted by $\Delta$, and the components of $m$ are denoted by $m^{(s)},m^{(w)},m^{(\delta)},m^{(r)}$.
    
    Let $\theta$ be an arbitrary policy. For a discount factor $a$, the total expected discounted cost under $\theta$ is given by
    \begin{equation}
        V_{\theta,a}(m)=\mathbbm{E}_\theta\left[\left.\sum_{t=0}^\infty a^t(\delta_t+\lambda y_t)~\right\vert~M_0=m\right].
    \end{equation}
    The value function $V_{a}(\cdot)=\inf_{\theta}V_{\theta,a}(m)$ is the best that can be achieved. The minimum expected discounted cost for operating the system from time $t = 0$ to $t = n-1$ is given by
    \begin{equation}
        v_{a,n}(m)=\inf_\theta\mathbbm{E}_\theta\left[\left.\sum_{t=0}^{n-1} a^t(\delta_t+\lambda y_t)~\right\vert~M_0=m\right],
    \end{equation}
    which is also written in recursive form as
    \begin{equation}\label{eq:v_recurs}
    \begin{aligned}
        v_{a,n+1}(m)=\min_y\Bigg\{&m^{(\delta)}+\lambda m^{(y)}\\&+ a\sum_{m'\in\mathcal{M}} P(m'\mid m,y)v_{a,n}(m')\Bigg\},
    \end{aligned}
    \end{equation}
    
    Let $h_a^\Delta(m)= V_a^\Delta(m)- V_a^\Delta(m_{ref})$ be the relative value function in $\mathcal{M}^{(\Delta)}$, where $m_{ref}\in\mathcal{X}^\Delta$ is some chosen reference state.

    Given a randomized stationary policy $\psi$ of $\mathcal{M}$, the $\Delta$-restriction of $\psi$ is the randomized stationary policy $\psi\vert\Delta$ of $\mathcal{M}^{(\Delta)}$ that chooses actions under the same distributions as $\psi$. Let $m_{m,m_{ref}}(\psi\vert\Delta)$ and $c_{m,m_{ref}}(\psi\vert\Delta)$ denote the expected time and expected cost, respectively, of the first passage from $m$ to $m_{ref}$.

    Given initial state $m$, the average cost under a policy $\theta$ is given by
    \begin{equation}
        J_{\theta}(m)=\limsup_{T\rightarrow \infty}\frac{1}{T}\mathbbm{E}_\theta\left[\left.\sum_{t=0}^{T-1} \delta_t+\lambda y_t~\right\vert~M_0=m\right].
    \end{equation}
    Then $J(\cdot) = \inf_\theta J_\theta(\cdot)$ is the best that can be achieved.

    The structure of our truncated MDP matches that of \textit{augmentation type approximating sequences}~\cite[Def. 3.1]{bib:sennott_asm}. Specifically, let $P(m'\mid m,y~;\Delta)$ denote the action-dependent transition probabilities in $\mathcal{M}^{(\Delta)}$. Assume that $P(\mu\mid m,y)>0$ for some $\mu\in\mathcal{X}^\Delta$ and $\mu\notin\mathcal{X}^\Delta$. Then, this excess probability is redistributed to some marginal states in $\mathcal{X}^\Delta$. In other words, there exists a probability distribution $q_{m'}(m,\mu,y,\Delta)$, such that

    \begin{equation}
    \begin{aligned}
        P(m'\mid m,y~;\Delta) &=
            P(m'\mid m,y) \\ &\phantom{=}+ \sum_{\mu\notin\mathcal{X}^\Delta} P(\mu\mid m,y)q_{m'}(m,\mu,y,\Delta)\,.
    \end{aligned}
    \end{equation}
    In our specific model, 
    \begin{equation}
        q_{m'}(m,\mu,y,\Delta)=\begin{cases}1 & \text{if }  m'=\{\mu^{(s)},\mu^{(w)},\Delta,\mu^{(r)}\}\\
        &\phantom{\text{if }}\text{and } \mu^{(\delta)}=\Delta+1,\\
            0 & \text{otherwise.}
        \end{cases}
    \end{equation}
    
    \begin{assumption}[ASM 1 of~\cite{bib:sennott_asm}]\label{asm1}
        There exist a nonnegative (finite) constant $L$, a nonnegative (finite) function $M(\cdot)$ on $\mathcal{X}$, and constants $\Delta_0$ and $a\in[0,1)$, such that $-L < h_a^\Delta(m) < M(m)$, for all $m\in\mathcal{X}^\Delta,~\Delta\ge \Delta_0,~a\in(a_0,1)$.
    \end{assumption}
    \begin{proof}
        First, we specify the constant $L$.
        \begin{equation}
            \begin{aligned}
            V_a^\Delta(m) &= \inf_{\theta}\mathbbm{E}_\theta\left[\left.\sum_{t=0}^\infty a^t(\delta_t+\lambda y_t)~\right\vert~M_0=m\right]\\
                            &\leq \sum_{t=0}^\infty a^t(m^{(\delta)} + \delta_t+\lambda y_t).
            \end{aligned}
        \end{equation}
        It can be seen that the series is finite because $a\in[0,1)$, and hence $V_a^\Delta(m)<\infty~\forall~m\in\mathcal{X}^\Delta$. Moreover, $V_a^\Delta(m)\geq 0$ because it is a sum of non-negative terms. Therefore,
        \begin{equation}
        \begin{aligned}
            & h_a^\Delta(m) -V_a^\Delta(m) =- V_a^\Delta(m_{ref})\\
            \Rightarrow & h_a^\Delta(m) \geq- V_a^\Delta(m_{ref})
        \end{aligned}
        \end{equation}
        Hence, we can set $L=\max_m V_a^\Delta(m)$.
        
        To specify the upper bound $M(\cdot)$, by~\cite[Prop. 4.2]{bib:sennott_asm}, it suffices to find a randomized stationary policy $\psi$ of $\mathcal{M}$ and an $\Delta_0$ such that
        \begin{enumerate}[label=\roman*)]
            \item For $\Delta \geq \Delta_0$ and $m \in \mathcal{X}^\Delta - {m_{ref}}$, there exists a $\psi\vert\Delta$ path from $m$ to $m_{ref}$ in $\mathcal{X}^\Delta$.
            \item The sequence $\{c_{m,m_{ref}}(\psi\vert\Delta)\}_{\Delta>\Delta_0}$, is bounded, for all $i \neq 0$.
        \end{enumerate}

        Define the time of the first passage from $m$ to $m_{ref}$, $\tau_{m,m_{ref}}\triangleq \min\{t\geq 0: M_t^\Delta=m_{ref}\mid M_0^\Delta = m\}$. Let $\psi$ be the always-transmit policy. The $\psi$-induced Markov chain is irreducible, and so there exists a $\psi\vert\Delta$ path from $m$ to $m_{ref}$. For the expected cost, we have
        \begin{equation}
            \begin{aligned}
                c_{m,m_{ref}}(\psi\vert\Delta) &=\mathbbm{E}\left[\left.\sum_{t=0}^{\tau_{m,m_{ref}}} a^t(\delta_t^\Delta+\lambda)~\right\vert~M_0=m\right]\\
                &\leq \sum_{t=0}^{\tau_{m,m_{ref}}} a^t(m^{(\delta)}+t+\lambda),
            \end{aligned}
        \end{equation}
        where $m_\delta$ is the AoII component of $m$. It can be seen that the series is finite because $a\in[0,1)$. Therefore, the upper bound $M(\cdot)$ exists. Specifically, $M(m)=\sum_{t=0}^{\tau_{m,m_{ref}}} a^t(m^{(\delta)}+t+\lambda)$.
    \end{proof}

    Under Assumption~\ref{asm1}, it can be shown the optimal average cost in $\mathcal{M}^\Delta$, for $\Delta>\Delta_0$, is $J^\Delta=\lim_{a\rightarrow 1}(1-a)V_a^\Delta(m)$ for all $m\in\mathcal{M}^\Delta$.
    \begin{assumption}[ASM 2 of~\cite{bib:sennott_asm}]
        We have $J^*\triangleq \limsup J^\Delta <\infty$ and $J^* \leq J(m)$, for all $m\in\mathcal{X}$.
    \end{assumption}
    \begin{proof}
        By~\cite[Corollary 5.2]{bib:sennott_asm}, it suffices to show that there exist $a\in[0,1)$ and $\Delta_0$ such that, for all $m\in\mathcal{X}^\Delta,~\mu\notin\mathcal{M}^\Delta~,y\in\mathcal{Y},~\Delta\ge \Delta_0,~a\in(a_0,1)$, and $n\geq 0$, we have 
        \begin{equation}\label{eq:asm2_cor52}
            \sum_{m'\in\mathcal{M}^\Delta} q_{m'}(m,\mu,y,\Delta)v_{a,n}(m')\leq y_{a,n}(\mu).
        \end{equation}
        Notice that
        \begin{equation}
            \sum_{m'\in\mathcal{M}^\Delta}\!\! q_{m'}(m,\mu,y,\Delta)v_{a,n}(m')=v_{a,n}((\mu^{(s)},\mu^{(w)},\Delta,\mu^{(r)})).
        \end{equation}
        Thus, to prove the inequality~\eqref{eq:asm2_cor52}, it suffices to show that $v_{a,n}((\mu^{(s)},\mu^{(w)},\Delta,\mu^{(r)}))\leq v_{a,n}((\mu^{(s)},\mu^{(w)},\mu^{(\delta)},\mu^{(r)}))$ for all $\mu^{(\delta)}>\Delta$. The inequality can be proved by induction on $n$ in~\eqref{eq:v_recurs}. Specifically, for $n=0$, it holds trivially since $v_{a,0}(\cdot)=0$. Expanding the summation in~\eqref{eq:v_recurs}, it can be shown that if the inequality holds for $n$, then it also holds for $n+1$. The details are omitted for brevity.
    \end{proof}
\end{proof}

\subsection{Proof of Proposition~\ref{prop:y0}}\label{apdx:proof_prop_y0}

\begin{proof}
    We utilize the RVI algorithm and show that at every iteration, the action that minimizes the value function is $y=0$. Let $V_t(\cdot,\cdot)$ denote the estimation of the value function at iteration $t\in\mathbbm{N}$. Next, define the (exact) Bellman operator,
    
    \begin{equation}\label{eq:bellman_op}
        TV_t(m) \triangleq \min_{y}\{\delta+\lambda y + \sum_{m'}P(m'\mid m, y)V_t(m')\}.
    \end{equation}

    Let $m_0$ be some arbitrary reference state. The RVI updates its estimate as follows:
    \begin{equation}\label{eq:rvi_update}
        V_{t+1}(m) = TV_t(m) - TV_t(m_0),\quad t=1,2,3,\dots
    \end{equation}
    Let $m=(s,s,0,0)$. Substituting the transition probabilities from~\eqref{eq:transitions} into~\eqref{eq:bellman_op}, and after minor algebraic operations, we obtain

    \begin{align}
        TV_t(m) = &\min\{\delta \!+\! \sum_{s'\neq s}p_{s,s'}V_t(s',s,1,0) \!+\! p_{s,s}V_t(s,s,0,0),\nonumber \\
        &\delta \!+\! \lambda \!+\! \sum_{s'\neq s}p_{s,s'}V_t(s',s,1,0) \!+\! p_{s,s}V_t(s,s,0,0)\},
    \end{align}
    where the first part in the minimum operator corresponds to $y=0$ and the second part to $y=1$. It is trivial to see that the action that achieves the minimum is always $y=0$. Since this holds for all $t\in\mathbbm{N}$ and the RVI converges to the true value function, it follows that the optimal action is also $y=0$.
\end{proof}

\subsection{Algorithms}\label{apdx:algo}

\begin{algorithm}
\caption{Threshold-based RVI}
\label{alg:RVI}
\begin{algorithmic}[1]
\STATE \textbf{Inputs:} Transmission cost $\lambda$, maximum AoII $\delta_{\max}$
\STATE Initialize value function $V(m) \gets 0$ for all $m\in\mathcal{M}$
\STATE Initialize thresholds $n(s,w,r)\gets \delta_{\max}$ for all $(s,w,r)$
\STATE Choose reference state $m_0=(s_0,w_0,\delta_0,r_0)$

\REPEAT
    \FORALL{$(s, w, r)$}
        \FOR{$\delta = 0$ \TO $\delta_{\text{max}}$}
            \STATE $m\gets (s, w, \delta, r)$
            
            \STATE Compute action values: \label{line:V}
            \[
            \begin{aligned}
            V^0(m) &= \delta + \sum_{m'} P(m'\mid m,y=0) V(m')\\
            V^1(m) &= \delta + \lambda + \sum_{m'} P(m'\mid m,y=1) V(m')
            \end{aligned}
            \]
            
            \IF{$V^1(m) \le V^0(m)$}
                \STATE $n(s,w,r) \gets \delta$ \COMMENT{Threshold found}
                
                \FOR{$\delta'=\delta$ \TO $\delta_{\text{max}}$}
                    \STATE $m'\gets (s, w, \delta', r)$
                    \STATE $V_{\text{new}}(m') = \delta' + \lambda + \sum\limits_{\tilde m} P(\tilde m\!\mid\! m',y\!=\!1) V(\tilde m)$ 
                \ENDFOR
                
                \STATE \textbf{break} \COMMENT{Exit loop over $\delta$}
            \ELSE
                \STATE $V_{\text{new}}(m) \gets V^0(m)$
            \ENDIF
        \ENDFOR
    \ENDFOR

    \STATE $V \gets V_{\text{new}} - V_{\text{new}}(m_0)$

\UNTIL{$V$ converges}

\RETURN $n(s, w, r)$
\end{algorithmic}
\end{algorithm}

\begin{algorithm}
\caption{Randomized Policy Bisection Search}
\label{alg:bisection}
\begin{algorithmic}[1]

\STATE Initialize $\lambda^{-} \gets 0$, $\lambda^{+} \gets 1$
\STATE Initialize $\psi_{\lambda^{+}}$ via Alg.~\ref{alg:RVI}
\STATE Initialize $\bar{R}_{\psi_{\lambda^+}}$ applying Theorem~\ref{thm:param} and using~\eqref{eq:barR}

\WHILE{$\bar{R}_{\psi_{\lambda^+}} > R$}
    \STATE $\lambda^{-} \gets \lambda^{+}$
    \STATE $\lambda^{+} \gets 2\lambda^{+}$
    \STATE Compute $\psi_{\lambda^{+}}$ via Alg.~\ref{alg:RVI}
    \STATE Apply Theorem~\ref{thm:param} and compute $\bar{R}_{\psi_{\lambda^{+}}}$~\eqref{eq:barR}
\ENDWHILE

\REPEAT
    \STATE $\lambda^* \gets \frac{\lambda^{+} + \lambda^{-}}{2}$
    \STATE Compute $\psi_{\lambda^*}$ via Alg.~\ref{alg:RVI}
    \STATE Apply Theorem~\ref{thm:param} and compute $\bar{R}_{\psi_{\lambda^*}}$~\eqref{eq:barR}
    \IF{$\bar{R}_{\lambda^*} \ge R$}
        \STATE $\lambda^{-} \gets \lambda^*$
    \ELSE
        \STATE $\lambda^{+} \gets \lambda^*$
    \ENDIF
\UNTIL{$\lambda^*$ converges}
\vspace{0.15\baselineskip}
\STATE $\rho = \dfrac{R - \bar{R}_{\psi_{\lambda^+}}}{\bar{R}_{\psi_{\lambda^-}} - \bar{R}_{\psi_{\lambda^+}}}$
\vspace{0.3\baselineskip}
\STATE Construct $\psi^*$ via Theorem~\ref{thm:cmdp_policy}
\RETURN $\psi^*$
\end{algorithmic}
\end{algorithm}

\begin{algorithm}
\caption{Randomized Single-threshold Policy Bisection Search}
\label{alg:bisection_n}
\begin{algorithmic}[1]

\STATE Initialize $n^{-} \gets 1$, $n^{+} \gets 2$
\STATE Initialize $\bar{R}_{n^+}$ applying Theorem~\ref{thm:param} and using~\eqref{eq:barR}

\WHILE{$\bar{R}_{n^+} > R$}
    \STATE $n^{-} \gets n^{+}$
    \STATE $n^{+} \gets 2n^{+}$
    \STATE Apply Theorem~\ref{thm:param} and compute $\bar{R}_{n^{+}}$~\eqref{eq:barR}
\ENDWHILE

\WHILE{$n^{+}-n^{-}>1$}
    \STATE $n^* \gets \frac{n^{+} + n^{-}}{2}$
    \STATE Apply Theorem~\ref{thm:param} and compute $\bar{R}_{n^*}$~\eqref{eq:barR}
    \IF{$\bar{R}_{n^*} \ge R$}
        \STATE $n^{-} \gets n^*$
    \ELSE
        \STATE $n^{+} \gets n^*$
    \ENDIF
\ENDWHILE
\vspace{0.15\baselineskip}
\STATE $\rho = \dfrac{R - \bar{R}_{n^+}}{\bar{R}_{n^-} - \bar{R}_{n^+}}$
\vspace{0.3\baselineskip}
\STATE Construct $\psi^*$ via Theorem~\ref{thm:cmdp_policy}
\RETURN $\psi^*$
\end{algorithmic}
\end{algorithm}

\end{document}